\newcommand{\rf}[1]{(\ref{#1})}
\newcommand{\beq}{\begin{equation}}
\newcommand{\eeq}{\end{equation}}
\newcommand{\bea}{\begin{eqnarray}}
\newcommand{\eea}{\end{eqnarray}}
\newcommand{\beas}{\begin{eqnarray*}}
\newcommand{\eeas}{\end{eqnarray*}}
\newcommand{\beqs}{\begin{displaymath}}
\newcommand{\eeqs}{\end{displaymath}}
\newcommand{\br}{\langle}
\newcommand{\kt}{\rangle}
\newcommand{\bra}[1]{\langle {#1}|}
\newcommand{\bdm}{\begin{displaymath}}
\newcommand{\edm}{\end{displaymath}}
\newcommand{\bbZ}{{\mathbb Z}}
\newcommand{\abs}[1]{\vert{ #1}\vert}
\newcommand{\half}{\frac{1}{2}}
\newtheorem{lem}{Lemma}
\newtheorem{thm}{Theorem}
\newtheorem{rem}{Remark}
\newcommand{\cut}{[-{\frac{a}{q}}, {\frac{a}{q}}]}
\newcommand{\Bbar}{ {\widetilde{G}} }
\title{\bf The Spectrum of  Asymptotic Cayley Trees}
\author{Bergfinnur Durhuus\footnote{\parbox[t]{\textwidth} 
{Department of Mathematical Sciences, University of Copenhagen, Universitetsparken 5, \newline DK 2100 Copenhagen, Denmark, e-mail: durhuus@math.ku.dk}}\\
Thordur Jonsson\footnote{The Science Institute, University of Iceland, 107 Reykjavik, Iceland, e-mail: thjons@hi.is} \\
John Wheater\footnote{\parbox[t]{\textwidth} {Rudolf Peierls Centre for Theoretical Physics, Department of Physics, Parks Road,\newline Oxford OX1 3PU, UK, e-mail: john.wheater@physics.ox.ac.uk}}}
\date{\today}
\begin{document}

\maketitle

\renewcommand{\baselinestretch}{1.1}

\abstract{We characterize the spectrum of the transition matrix for simple random walk 
on graphs consisting of a finite graph 
with a finite number of infinite Cayley trees attached.   We show that 
there is a continuous spectrum identical to that for a Cayley tree and, in general, a 
non-empty pure point spectrum. We apply
our results to studying continuous time quantum walk
on these graphs.  If the pure point spectrum is nonempty the walk is in general 
confined with a nonzero probability.}

\bigskip
\noindent Keywords: graph spectrum, Cayley tree, random walk,  quantum walk
\newpage
\section{Introduction}

The spectrum of the Laplacian and similar operators defined on graphs, both finite and infinite, has long been a topic of interest and many results are known, see for example \cite{graphsbook}. These systems are not only of intrinsic interest but also, for example, describe  the propagation of signals in discrete models of media or in networks of various kinds.  In this paper we study the spectrum of the transition, or hopping,  matrix on a class of infinite graphs that we call asymptotic Cayley trees.  These graphs, which are defined in  Section \ref{subsec:ACayleytree defn},  are trees
with a fixed coordination number outside a finite subgraph; they can be constructed by grafting planted Cayley trees by the root to the vertices of a finite graph.

The transition matrix is the adjacency matrix scaled by the degree of vertices and is therefore the generator of simple random walk on the graph. Furthermore, walks on an asymptotic Cayley tree can easily be decomposed into walks on the finite graph and walks on the grafted tree(s).  We exploit these two facts by using elementary probabilistic methods to find an exact rational relationship  between the resolvent (or Green's) functions of the graph of interest and those of the simpler constituents. These relationships enable us to characterize the spectrum of the transition matrix  on an asymptotic Cayley tree. The spectrum consists of a continuous component with support on the same interval as for the regular tree, and a pure point component. For the latter we 
establish bounds on the multiplicity of any given eigenvalue, and both upper and lower bounds on the total number of such normalizable eigenfunctions. These bounds are directly expressed in terms of the basic properties of the finite graph and its spectrum. 
The total number of normalizable eigenfunctions is shown to be bounded above by the number of vertices in the finite graph. The lower bound demonstrates that large classes of graph must have normalizable eigenfunctions. In particular 
if the number of eigenfunctions of the finite sub-graph having eigenvalues lying outside the continuous spectrum  exceeds twice the number of graft vertices, then the graph must have a non-empty pure point spectrum.
In addition to  these general theorems that apply to all asymptotic Cayley trees, we discuss as examples special cases where more precise statements can be made.

Our results are complementary to, and extend, those of \cite{verdiere} where the spectrum of the adjacency matrix on these graphs was studied by 
mapping the problem onto one of the adjacency matrix 
on a regular Cayley tree plus a finite rank perturbation.  
The mapping is shown always to exist but the relationship between the perturbation and the properties of the original graph is rather indirect.
The perturbation problem is then analysed using the apparatus of the theory of Schr\"odinger operators; there is a continuous component of the spectrum with support on the same interval as the regular tree and a pure point spectrum. The 
dimension of the pure point component is bounded above by the rank of the perturbation, and examples are given to show that normalizable eigenfunctions are indeed present in some cases. However, neither a lower bound on the dimension, nor results for the multiplicity of individual eigenvalues are given.

As an example of the application of our results we examine the properties of continuous time quantum walk on 
asymptotic Cayley trees. Quantum walks on graphs have been studied extensively in recent years, the 
main motivation being the development of efficient quantum algorithms.  For recent reviews
see, e.g., \cite{review,kadian2021quantum,Portugal}.  There are two classes of quantum walk: the discrete time walk which introduces a
new quantum degree of freedom usually called the `coin'; and the continuous time quantum walk which is
 governed by the Schr\"odinger equation and does not require extra  
 degrees 
of freedom.  The relation between these two types of quantum walks is complicated \cite{childs,strauch}, and seems not to be completely understood.  In this 
paper we are solely concerned with continuous time walks whose evolution  Hamiltonian is taken to be (minus) the transition matrix.

This paper is organized as follows.  In Section \ref{sec:graphsandwalks} we introduce our notation for graphs and the operators defined on them, 
and study classical random walk on the pure Cayley tree.   In Section \ref{sec:ACayleytree}
we define asymptotic Cayley trees, establish a number of results for  
classical random
walk on such graphs, and then prove our main theorems about their spectra.  In Section \ref{sec:examples} we discuss some 
special cases, and in Section \ref{sec:quantumwalks} we use the results from the earlier sections to analyse the behaviour
of continuous time quantum walks on  asymptotic Cayley trees.

\section{Graphs and random walks}\label{sec:graphsandwalks}

\subsection{Basic definitions}

Let $G=(V,E)$ be an undirected simple (i.e., with no multiple edges and no loops) 
connected graph with vertex set $V=V(G)$ and edge set $E=E(G)$. 
{For $u,v\in V$ we denote by $d_G(u,v)$ the graph distance between $u$ and $v$.}
We will consider both finite and infinite graphs but they are all locally finite, i.e., the number 
of neighbours $\sigma_u$ of any vertex $u$, called the degree of $u$, is assumed to be finite. 
In fact, for 
the graphs to be considered the degree is uniformly bounded,
\begin{equation}\label{eqn:degreebound}
\sigma_u \,\leq \, C\,,\quad  u\in V(G)\,,
\end{equation}
for some finite constant $C$. We consider the vector space $\ell^2(G)$ consisting of square summable 
complex valued functions $f:V\mapsto \mathbb C$ equipped with two different inner 
products, $\br\cdot|\cdot\kt$ and $(\cdot,\cdot)$, given by 
\beq\label{1}
\br f | g\kt =\sum_{u\in V} \bar{f}(u)g(u)\quad \mbox{and}\quad (f,g) = \sum_{u\in V} \sigma_u^{-1}\bar{f}(u)g(u)\,.
\eeq
Due to \eqref{eqn:degreebound}, the corresponding norms $\|\cdot\|_s$ and $\|\cdot\|_r$ are equivalent,
\beq
\|f\|_r\leq \|f\|_s\leq C\|f\|_r\,,
\eeq
and $\ell^2(G)$ is a Hilbert space with respect to both inner products. 
The functions in $\ell^2(G)$ will be referred to as $\ell^2$-functions.
For $u\in V$ we denote by $|u\kt$ the function which takes the value 1 at $u$ and is zero elsewhere, 
{and by
$\bra u$ its covector fulfilling $\br u|v\kt=\delta_{u,v}$ for all $v\in V$}.

Given a graph $G=(V,E)$, its adjacency matrix $A=(A_{uv})$, indexed by $V$, is defined by 
$A_{uv}=1$ if $(u,v)\in E$ and $A_{uv}=0$ otherwise.  The matrix 
$A$ defines an operator on $\ell^2(G)$ by
\beq\label{5}
(Af)(u)=\sum_{v:(u,v)\in E} f(v)\,.
\eeq
Using \eqref{eqn:degreebound}, it is easily seen that $A$ is a bounded operator and 
it is self-adjoint with respect to $\br\cdot|\cdot\kt$. 
The degree matrix is defined as the diagonal matrix $D_{uv}=\delta_{uv}\sigma_u$
and it likewise defines an operator $D$ on $\ell^2(G)$ which is bounded by 
\eqref{eqn:degreebound} and clearly has a bounded inverse. Moreover, it is selfadjoint w.r.t.\ 
both inner products and we have
\begin{equation}\label{eqn:relinn} 
(f,g) \,=\, \br f|D^{-1}g\kt\,,\quad f,g\in\ell^2(G)\,.
\end{equation}
The transition matrix $K^G$ for $G$ is defined by \footnote{The transition matrix is also called the
hopping matrix in the physics literature.}
\begin{equation}\label{eqn:defKG}
K^G=AD^{-1}
\end{equation}
and is related to the  
normalized
Laplacian $L^G$ by  
\beq\label{7}
L^G=I-K^G\,,
\eeq
where $I$ is the unit operator on $\ell^2(G)$. It is important to note that $K^G$, and 
hence also $L^G$, is a self-adjoint operator w.r.t.\ the inner product $(\cdot,\cdot)$ as 
a consequence of \eqref{eqn:defKG} and \eqref{eqn:relinn} and the fact that $A$ is 
self-adjoint w.r.t.\ $\br\cdot|\cdot\kt$. Moreover, $L^G$ is non-negative since
\beq
(f,L^Gf) = \sum_{(u,v)\in E(G)} \Big(D^{-1}f(u)-D^{-1}f(v)\Big)^2\,.
\eeq
A central issue in this paper is to determine the spectrum of $L^G$  for a class of graphs 
defined below. Because of the relation \eqref{7} this is equivalent to determining the 
spectrum of $K^G$, which we shall refer to as the spectrum of $G$, and will be our main 
object of study in sections \ref{sec:ACayleytree} and \ref{sec:examples}
\footnote{\parbox[t]{\linewidth} {Note that sometimes a different definition is used; namely, that the spectrum of $A$ is called the spectrum of $G$.}}.

Given an arbitrary graph $G$,  
we next recall the relation between $K^G$ and the simple random walk on $G$. 
By definition, a simple discrete time 
random walk, located at a vertex $u$ at integer time $n$,
moves with equal probabilty, $\sigma_u^{-1}$, to one of the neighbours of $u$ at time $n+1$ .
Let $\omega =(\omega_1,\omega_2,\ldots ,\omega_{n+1})$ be a path from a vertex $u$ to a vertex $v$,
i.e., $\omega_j\in V$, $j=1,\ldots ,n+1$, $\omega_1=u$, $\omega_{n+1} =v$ and
$(\omega_j,\omega_{j+1})\in E$ for $j=1, \ldots n$.  We will refer to $n$ as the length of the path $\omega$
and denote it $|\omega |$.
If the walk is located at $u$ at time $n=0$, then the probability that it is at $v$ after $n$ steps is
given by
\beq\label{rw1}
p_n(u,v)=\sum_{\omega :u\to v} \prod_{j=1}^{n}\sigma^{-1}_{\omega_j}\,,
\eeq
where the sum runs over all paths of length $n$ 
from $u$ to $v$.  If there is no such path then the probability is zero. By convention $p_0(u,u)=1$.  

One checks easily that 
\beq\label{Tn} p_n(u,v)=\br v |(K^G)^n|u\kt\,.
\eeq   
In view of \rf{Tn} one says that $K^G$  generates  simple 
random walk on $G$. 
The generating function for the
probabilities  $p_n(u,v)$ is defined as
\beq\label{rw2}
Q^G_{u,v}(z)=\sum_{n=0}^\infty p_n(u,v) z^n\,,
\eeq
which, using  \eqref{Tn}, is also given by
\beq\label{rw2a}
Q^G_{u,v}(z)=\br v| \left( I-{zK^G}\right)^{-1}|u\kt\,.
\eeq
Note that $\lambda^{-1}Q^G_{u,v}(\lambda^{-1})$ is a matrix element of the 
resolvent $(\lambda - K^G)^{-1}$ of the transition matrix.
 If $u=v$ then we write $Q^G_{u,v}=Q^G_u$, and $Q^G_u(z)$ is the
 generating function for return probabilities to
 $u$.   
 
  For $n>0$, let $p_n^{(1)}(u)$ denote the probability that a walk 
  starting at $u$ is back at $u$ after $n$ steps and that this is the first 
  return to $u$.   The associated generating function is
 \beq\label{rw3}
 P^G_u(z)=\sum_{n=2}^\infty p_n^{(1)}(u) z^n\,,
 \eeq
 called the first return generating function.  We have $P^G_u(1)\leq 1$ 
 since the probabilities $p_n^{(1)}(u)$ 
 refer to mutually exclusive events.  
A walk returning to the starting point $u$ can visit $u$ arbitrarily 
many times before ending at $u$. 
 It follows that
 \beq\label{rw4}
 Q^G_u(z)=\frac{1}{ 1-P^G_u(z)}\,.
 \eeq

 \subsection{The Cayley tree and its spectrum\label{subsec:treeandspectrum}}
 
 A regular Cayley tree of degree $q$ is an infinite tree graph $T_q$ where all the vertices have the same degree $q$.
(This graph is sometimes referred to as the Bethe lattice.)
A planted Cayley tree of degree $q$ is an infinite tree graph $T_q'$ where 
all the vertices have the 
same degree $q$ except one of them, called the root, which has degree 1.  

{The  first return and all-returns generating functions for $T_q$ are independent of the starting vertex, and we denote them by 
$P$ and  $Q$, respectively}.
Moreover, the first return generating  function for $T_q$ obviously equals the first return generating function for $T'_q$ at the root. In order to calculate the latter we note that 
in the first step the walk moves from the root $u$ to its neighbour $v$ with probability $1$.
Clearly, the walk must return to $v$ before taking the final step to 
 $u$ which has probability $q^{-1}$, and it can return to $v$ arbitrarily often before 
 returning to $u$.  Hence, the first return generating function satisfies the equation
 \bea
 P(z) & = & z\sum_{n=0}^\infty \left(\frac{q-1}{ q}P(z)\right)^n  \frac{z}{q}
 \nonumber\\
         &=& \frac{z^2}{q- (q-1)P(z)}\,. \label{app1}
 \eea
 The factor of $z$ is associated with the first step and the factor $z/q$ comes from the last step.
 The solution of \eqref{app1} satisfying the initial condition $P(0)=0$ is
 \beq\label{app2}
 P(z)=\frac{q -\sqrt{q^2-a^2z^2}}{2(q-1)}\,,
 \eeq
 where
 \begin{equation}
     a=2\sqrt{q-1}\,.
 \end{equation}
 By \rf{rw4}  we then find that the return generating function for $T_q$ is
 \beq\label{app3}
 Q(z)=\frac{2-q+ \sqrt{q^2-a^2z^2}}{2(1-z^2)}\,.
 \eeq 

We  can calculate $Q_{u,v}(z)$ for arbitrary vertices $u$ and $v$ in a similar way. First, note that by 
\eqref{rw1} and \eqref{rw2} we have
 \beq\label{app4}
Q_{u,v}(z)= 
\sum_{\omega :u\to v}\left(\frac{z}{q} 
\right)^{|\omega |}\, ,
 \eeq
 where the sum is over all paths $\omega$ from $u$ to $v$.  Let $(u,u_1,\ldots, u_{n-1},v)$ be the 
 shortest path from $u$ to $v$.   Any path 
 $\omega$ from $u$ to $v$ can be uniquely decomposed
 into a sequence of 
 $n+1$ paths $\omega^1,\omega^2, \ldots ,\omega^{n+1}$ where $\omega^1$ is any path from 
 $u$ and back to $u$; the first step in $\omega^2$ is from $u$ to $u_1$, $\omega^2$ avoids $u$
 and returns to $u_1$; the first step in $\omega^3$ is from $u_1$ to $u_2$, $\omega^3$ avoids
 $u_1$ and returns to $u_2$ and so on for the subsequent vertices,
 the first step in $\omega^{n+1}$ being from $u_{n-1}$ to $v$ and otherwise avoiding $u_{n-1}$ before returning to $v$.   Clearly
 \beq\label{app5}
 \sum_{j=1}^{n+1}|\omega^j |=|\omega|\,.
 \eeq
 The sum over $\omega^1$ yields a factor $Q(z)$.  Summing over $\omega^2$ gives
 a factor of
 \beq
\frac{z/q}{1-\frac{q-1}{ q}P(z)}\,,
 \eeq
 where $z/q$ is associated with the initial step in $\omega_2$.  The same factor is obtained 
 by summing over each of the subsequent $\omega^j$'s.  Using \rf{app1} we find that
 \beq\label{app6}
  Q_{u,v}(z)=Q(z)(z^{-1}P(z))^{d(u,v)}\,.
 \eeq
{Note that the same formula applies for a planted Cayley tree $T_q'$ if $u$ is the root, and $v$ any vertex.}

It is important to note that, as a consequence of \eqref{app2} and \eqref{app3}, the functions 
$Q(\lambda^{-1})$ and $\lambda P(\lambda^{-1})$ are both analytic in the complex $\lambda$-plane 
with a cut on the real axis along the interval $[-{\textstyle\frac{a}{q}},{\textstyle\frac{a}{q}}]$, 
and they are real-valued on the real axis outside the cut. 
The limit $\lim_{\epsilon\downarrow 0} Q_u((\lambda +i\epsilon)^{-1})$ exists for all $\lambda\in\mathbb R$ 
and in view of  \eqref{rw2a} this entails (see, e.g.,\ \cite{ReedSimonIV} Sec.\ XIII.6 ) 
that $|u\kt$ belongs to the absolutely continuous subspace 
of $\ell^2(T_q)$ and has spectral density $\rho_u$ concentrated on 
$[-{\textstyle\frac{a}{q}},{\textstyle\frac{a}{q}}]$ given by
\beq\label{eqn:spectral}
\rho_u(\lambda) = \lim_{\epsilon\downarrow 0} \frac{1}{\pi\lambda} \mbox{\rm Im} 
\,Q_u((\lambda+ i\epsilon)^{-1}) = 
\frac{1}{2\pi} \frac{\sqrt{a^2-q^2\lambda^2}} {1-\lambda^2}\,,\quad \lambda\in 
[-{\textstyle\frac{a}{q}},{\textstyle\frac{a}{q}}]\,.
\eeq
Since the vectors $|u\kt$ span all of $\ell^2(T_q)$, it follows that $T_q$ has absolutely 
continuous spectrum equal to $[-{\textstyle\frac{a}{q}},{\textstyle\frac{a}{q}}]$, which  
was first proven in  \cite{spectrum}.  Furthermore,
\begin{equation}\label{eqn:CayleySpecRep}
\bra u f(K)|v\kt = \int_{-{\textstyle\frac{a}{q}}}^{\textstyle\frac{a}{q}} f(\lambda) \,
\rho_{u,v}(\lambda)\, d\lambda 
\end{equation}
for any continuous function $f$ on $[-{\textstyle\frac{a}{q}},{\textstyle\frac{a}{q}}]$, where 
\begin{equation}\label{eq: rhol}
\rho_{u,v}(\lambda)  = \lim_{\epsilon\downarrow 0} \frac{1}{2\pi i
\lambda }\left(Q_{u,v}((\lambda-i\epsilon)^{-1}) -Q_{u,v}((\lambda+i\epsilon)^{-1}) \right)\,.
\end{equation}

Clearly, $\rho_{u,v}$ depends only on the distance $\ell$ between $u$ and $v$ in $T_q$ and hence 
we use the notation 
{$\rho^{(\ell)}$} for this function. Recalling  \eqref{app6} we get from \eqref{eq: rhol} that 
\beq\label{eqn:sphericalfn1}{\rho^{(\ell)}(\lambda)=-\frac{1}{\rho_u(\lambda)} \lim_{\epsilon\downarrow 0}\mathrm{Im} \left( (\lambda+i\epsilon)^{\ell-1}Q((\lambda+i\epsilon)^{-1}) \,P((\lambda+i\epsilon)^{-1})^\ell \right). }\eeq 
Setting $\lambda= aq^{-1}\cos\theta$, $\theta\in[0,\pi]$, and defining $\tan\beta=\frac{q}{q-2}\tan\theta$, gives
\bea \rho^{(\ell)}(\lambda(\theta))
&=&\frac{1}{(q-1)^{\half \ell}}  \frac{\sin(\ell\,\theta+\beta)}{\sin\beta}     \nonumber\\
&=& \frac{1}{q\,(q-1)^{\half \ell}}  \left(2\,C_\ell(\cos\theta) +(q-2)\,U_\ell(\cos\theta)\right)\,,\label{eqn:sphericalfn2}\eea
where $C_\ell$ and $U_\ell$ are Chebyshev polynomials of the first and second kind. (We use $C_\ell$ for the polynomial of the first kind to avoid confusion with our notation for trees.)
 
\section{Asymptotic Cayley Trees}\label{sec:ACayleytree}

\subsection{Definition}\label{subsec:ACayleytree defn}

Given a graph $G$ and a subgraph $F$ we denote by $G\setminus F$ the subgraph of $G$ spanned by 
vertices of $G$ that are not in $F$ 
 (i.e., the edges of $G\setminus F$ are those whose end vertices are both outside $F$). 
 If $T'_q$ is a planted Cayley tree of degree $q$ with root $r$, we set $S_q = T'_q\setminus r$; thus $S_q$ is a tree, 
 {commonly known as the semi-infinite Cayley tree of order $q$,} with all vertices of degree $q$, except one which is of degree $q-1$ and is defined 
 as the root of $S_q$.

 Let $B$ be a graph and let $T'_q$ be a planted tree disjoint from $B$.  A graph defined by 
 identifying the root $r$ of $T'_q$ with some vertex $v_0$ of $B$ is then said to be obtained by 
 grafting $T'_q$ onto $B$ at $v_0$, which we call the graft vertex.

Given a vertex $u_0$ of $G$, let $B_R(u_0)$ denote the subgraph  spanned by the vertices at graph 
distance at most $R$ from $u_0$, i.e. the (closed) ball in $G$ of radius $R$ around $u_0$.
We define an \emph{asymptotic Cayley tree} of degree $q$ to be a connected graph $G$ with the 
property that there exists a vertex $u_0$ and an integer $R\geq 0$ such that $G\setminus B_R(u_0)$ 
is a finite, non-empty union of disjoint trees isomorphic to $S_q$, whose roots are at graph distance 
$R+1$ from $u_0$ in $G$. 
By the triangle inequality it follows that if $G$ satisfies the stated condition for some vertex $u_0$ 
and $R\geq 0$, then it also satisfies the condition with $u_0$ replaced by any vertex $u_1$ and with 
$R$ replaced by any integer larger than or equal to $R+ d_G(u_0,u_1)$.

While the definition just given is formally simple, we shall make frequent use of the following more constructive characterisation of an asymptotic Cayley tree. First, note that if $G$ is obtained by successively grafting planted Cayley trees of degree $q$ onto some finite graph $B$, 
then $G$ is an asymptotic Cayley tree of degree $q$. Indeed, let $u_0$ be some vertex in $B$ and choose 
\beq
R\geq  \mbox{\rm{max}}\,\{d_B(v,v')\mid v,v'\in V(B)\}\,.
\eeq
\noindent Then $G, u_0, R$ satisfy the defining condition of an asymptotic Cayley tree above. Conversely, if $G, u_0, R$ satisfy this condition, then $G$ is obtained by grafting $q-1$ planted Cayley trees onto $B_{R+1}(u_0)$ at each vertex at distance $R+1$ from $u_0$. This proves that asymptotic Cayley trees of degree $q$ are precisely those graphs that can be obtained by grafting a finite (positive) number of planted Cayley trees onto some finite graph $B$.

We call the minimal such finite graph $B$ 
the \emph{core} of $G$ and denote it by $G^{(0)}$. In order to see that $G^{(0)}$ is unique we define a subtree of $G$ to be Cayley-maximal if it is a maximal subtree of $G$ that is isomorphic to $T_q'$ and such that all its vertices except possibly the root have degree $q$ in $G$. It is then easily verified that two Cayley-maximal subtrees of $G$ are either identical or share at most their roots. It follows that removing all such subtrees, save their roots, is a well defined process and yields a finite subgraph $B$ 
that is obviously the smallest one with the desired property. 

 Given an asymptotic Cayley tree $G$, we define $V_0^G\subseteq V(G^{(0)})$ to be the 
set of vertices at which planted Cayley trees are grafted to obtain $G$. Any sequence of the form
\begin{equation}
    {\mathcal G}= \{ G^{(0)},G^{(1)}, G^{(2)}, \ldots , G^{(n)}=G \}\,, \label{eqn:Gsequence}
\end{equation}
where $G^{(j+1)}$ is obtained from $G^{(j)}$ by grafting one or more trees isomorphic to $T'_q$ onto $G^{(j)}$ at a vertex $v_j\in V^G_0$, will be called a $G$-sequence in the following. 

We note that, as the reader may also easily verify, the definition given here of an asymptotic Cayley tree is 
equivalent to the one given in Definition 2.1 of \cite{verdiere} of a graph 
isomorphic to a regular tree at infinity.

\subsection{Generating functions}

We now establish the relationship between the resolvents of the transition matrices on two graphs  that are related by the grafting of planted trees at a single vertex. 
\begin{lem}\label{lemma:resolventmap}
Let $B$ be a (finite or infinite) graph and 
denote by $G$ the graph that is obtained by grafting $p$ copies of $T'_q$ at a vertex  $v_0$, of degree $\sigma_{v_0}$, in $B$, which is then identified with $v_0\in V(G)$. 
Then the following statements relating  $Q^G_{u,v}$ and $Q^B_{u,v}$ hold:
\begin{itemize} 
    \item[(i)] For $u=v=v_0$,
    \begin{eqnarray}
      Q^G_{v_0}(z)
     &=& \frac{(\sigma_{v_0}+p)Q^B_{v_0}(z)Q(z)}{\sigma_{v_0}Q(z) + p\,Q^B_{v_0}(z)}\,,\label{eqn:Qv0map}
\end{eqnarray}
 where $Q^B_{v_0}(z)$ is given by \eqref{rw4}.
    \item[(ii)]For  $u\in V(B)$, $v\in V(B)\setminus\{v_0\}$,
    \begin{eqnarray}
Q^G_{u,v}(z)
&=&Q^{B}_{u,v}(z) -  \frac{p\,Q^{B}_{u,v_0}(z)Q^B_{v_0,v}(z)}{\sigma_{v_0}Q(z) + p\,Q^B_{v_0}(z)}\,.\label{eqn:QG3}
\end{eqnarray}
    \item[(iii)]  For $u\in V(B)$ and $v\in G\setminus B$,  
    \begin{eqnarray}
Q^G_{u,v}(z)
&=& \frac{q\,Q(z)Q^B_{u,v_0}(z)}{\sigma_{v_0}Q(z) + p\,Q^B_{v_0}(z)} \left(z^{-1}P(z)\right)^{d_G(v_0,v)}\,.\label{eqn:QG2}
\end{eqnarray}
    \item[(iv)] For $u,v\in G\setminus B$,
     \begin{eqnarray} Q^{G}_{u,v}(z) &=&   Q(z) \,\left(1-\alpha_p(z)\, \left( z^{-1}P(z)\right)^{d_G(v_0,u)+d_G(v_0,v)-d_G(u,v)}\right)\nonumber\\
     &&\qquad\times \left( z^{-1}P(z)\right)^{d_G(u,v)}\,,\label{eqn:QG1}
     \end{eqnarray}
where
\beq\label{eqn:alphap}
\alpha_p(z)=1-\frac{q\,Q^B_{v_0}(z)}{\sigma_{v_0}Q(z)+pQ^B_{v_0}(z)}\,.
\eeq
\end{itemize}
\end{lem}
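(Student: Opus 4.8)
The plan is to prove all four identities by a unified path-counting argument, exploiting the decomposition of random walks on $G$ into excursions into the grafted trees and walks in $B$. The key observation is that $v_0$ is a cut vertex separating the $p$ grafted copies of $T_q'$ from each other and from $B$, so every walk on $G$ visiting various regions must pass through $v_0$ each time it changes region. Concretely, I would first record the single-tree data: for a planted tree $T_q'$ with root $r$, the first-return generating function at $r$ equals $P(z)$ by \eqref{app1}, and more generally \eqref{app6} gives the decay factor $(z^{-1}P(z))^{d(r,v)}$ for reaching an interior vertex $v$; also the degree of $v_0$ in $G$ is $\sigma_{v_0}+p$, so a step out of $v_0$ lands in $B$ with the appropriate probability or enters one of the $p$ tree-neighbours each with probability $(\sigma_{v_0}+p)^{-1}$.

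For part (i), I would compute the first-return generating function $P^G_{v_0}(z)$ of $G$ at $v_0$ by splitting a first-return excursion according to whether its first step goes into $B$ or into one of the trees. A step into one of the grafted trees contributes $z\cdot(\sigma_{v_0}+p)^{-1}$ for leaving $v_0$, then an excursion in $S_q$ that returns to the tree-neighbour $w$ before the final step back to $v_0$; summing the $\frac{q-1}{q}P(z)$ geometric series exactly as in the derivation of \eqref{app1} gives a contribution $\frac{p}{\sigma_{v_0}+p}\cdot\frac{z^2}{q-(q-1)P(z)} = \frac{p}{\sigma_{v_0}+p}P(z)$. A step into $B$ followed by any first-return-to-$v_0$-in-$B$ excursion rescales the $B$-excursion weights by $\frac{\sigma_{v_0}}{\sigma_{v_0}+p}$ (since each step out of $v_0$ now has $\sigma_{v_0}+p$ choices rather than $\sigma_{v_0}$), but one must be careful: a $B$-excursion may itself revisit $v_0$, and at each such intermediate visit the walk may take side trips into the trees. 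Resumming those side trips at each visit to $v_0$ replaces $P^B_{v_0}(z)$ by a modified series; organising this bookkeeping cleanly is the main obstacle. The cleanest route is to introduce the \emph{first-return to $v_0$ using only $B$-edges}, call it $P^B_{v_0}(z)$ with $Q^B_{v_0}=(1-P^B_{v_0})^{-1}$, and the \emph{first-return using only tree-edges}, which is $\frac{p}{\sigma_{v_0}+p}P(z)$; then every first return in $G$ is a first return in one of these two "colours", so $P^G_{v_0}(z) = \frac{\sigma_{v_0}}{\sigma_{v_0}+p}P^B_{v_0}(z) + \frac{p}{\sigma_{v_0}+p}P(z)$, where the factor $\frac{\sigma_{v_0}}{\sigma_{v_0}+p}$ accounts for the first step's reduced probability and no further correction is needed because $P^B_{v_0}$ already sums all $B$-only excursions including those revisiting $v_0$. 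Substituting $P^B_{v_0}=1-1/Q^B_{v_0}$ and $P = 1-1/Q$ (from \eqref{rw4}) into $Q^G_{v_0}=(1-P^G_{v_0})^{-1}$ and simplifying yields \eqref{eqn:Qv0map}.

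For parts (ii)--(iv) I would use the standard decomposition $Q^G_{u,v}(z)=Q^G_u(z)\cdot R^G_{u,v}(z)$, where $R^G_{u,v}$ is the generating function for walks from $u$ to $v$ that hit $v$ only at the final step; equivalently, decompose an arbitrary $u\to v$ walk at its last visit to $v_0$ (for (iii) and (iv)) or use the first-passage factorisation through $v_0$. For (ii), with $u,v\in V(B)$ and $v\ne v_0$: a walk in $G$ from $u$ to $v$ either never visits $v_0$ — contributing the part of $Q^B_{u,v}$ from $v_0$-avoiding walks — or visits $v_0$, in which case it factors through the first and last visits to $v_0$; collecting terms and using (i) together with the elementary identity relating $v_0$-avoiding walks to $Q^B_{u,v}, Q^B_{u,v_0}, Q^B_{v_0}$ gives \eqref{eqn:QG3}. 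I expect the algebra here to reduce to: (walks avoiding $v_0$) $= Q^B_{u,v} - Q^B_{u,v_0}(Q^B_{v_0})^{-1}Q^B_{v_0,v}$, and then reinstating $v_0$-visits in $G$ contributes $Q^B_{u,v_0}(Q^B_{v_0})^{-1}\,Q^G_{v_0}(z)\,(Q^B_{v_0})^{-1}Q^B_{v_0,v}$; substituting \eqref{eqn:Qv0map} and simplifying produces the stated correction with denominator $\sigma_{v_0}Q+pQ^B_{v_0}$. For (iii), $v$ lies in a grafted tree at distance $d_G(v_0,v)$ from $v_0$; every $u\to v$ walk must reach $v_0$, then eventually make a final excursion from $v_0$ that enters the correct tree and proceeds to $v$ without returning to $v_0$. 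The first factor is $Q^B_{u,v_0}(Q^B_{v_0})^{-1}Q^G_{v_0}$ (walk from $u$ to last visit of $v_0$, weighted correctly), and the tree-traversal factor — one step into the tree with weight $\frac{z}{\sigma_{v_0}+p}$ times $(z^{-1}P(z))^{d_G(v_0,v)-1}\cdot\frac{z}{q}\cdot(\text{resummed returns})$ — collapses, using \eqref{app6} and \eqref{app1}, to $\frac{q}{\sigma_{v_0}+p}\cdot\frac{1}{1-\frac{q-1}{q}P}\,(z^{-1}P)^{d_G(v_0,v)}$; multiplying and substituting \eqref{eqn:Qv0map} gives \eqref{eqn:QG2}. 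Finally (iv), with $u,v$ both in grafted trees: if they are in the same tree let $m=d_G(v_0,u)+d_G(v_0,v)-d_G(u,v)$ be twice the distance from $v_0$ to the branch point; a walk from $u$ to $v$ either stays in that tree (governed by the pure-tree formula \eqref{app6} on $T_q'$, giving $Q(z)(z^{-1}P)^{d_G(u,v)}$ once one accounts for the root having degree $q$ vs the modified $v_0$) or passes through $v_0$, and the correction factor is exactly $\alpha_p(z)$ times the "excess" decay $(z^{-1}P)^{m}$; if $u,v$ are in different trees then $d_G(u,v)=d_G(v_0,u)+d_G(v_0,v)$, $m=0$, and the formula degenerates to $Q(z)(1-\alpha_p)(z^{-1}P)^{d_G(u,v)}=\frac{qQ\,Q^B_{v_0}}{\sigma_{v_0}Q+pQ^B_{v_0}}(z^{-1}P)^{d_G(u,v)}$, which matches the "walk through $v_0$ only" computation. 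The main obstacle throughout is the careful resummation of tree side-trips at each visit to $v_0$ and keeping the degree factor $\sigma_{v_0}+p$ consistent; once part (i) is established cleanly with the two-colour decomposition, parts (ii)--(iv) follow by routine (if lengthy) substitution of \eqref{eqn:Qv0map}, \eqref{app1}, and \eqref{app6}, so I would present (i) in full and sketch (ii)--(iv) as variations on the same scheme.
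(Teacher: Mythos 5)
Your overall strategy is the same as the paper's: decompose walks at their visits to the cut vertex $v_0$, resum tree excursions via \eqref{app1} and \eqref{app6}, and keep track of the modified degree $\sigma_{v_0}+p$ at $v_0$. Your part (i) is exactly the paper's argument (the paper writes $Q^G_{v_0}=\bigl(1-\tfrac{p}{\sigma_{v_0}+p}P-\tfrac{\sigma_{v_0}}{\sigma_{v_0}+p}P^B_{v_0}\bigr)^{-1}$ directly; your worry about ``side trips at intermediate visits to $v_0$'' is moot precisely because first-return excursions by definition do not revisit $v_0$), and your structural description of (iv), including the consistency check for $u,v$ in different trees, matches the paper's computation via the subgraph $G\setminus\{B\setminus\{v_0\}\}$.

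There is, however, a concrete error in your part (ii) that makes the claimed simplification fail. In the factorization through the first and last visits to $v_0$, the final leg --- a walk from $v_0$ to $v\in B$ that does not return to $v_0$ --- has its first step taken \emph{in $G$}, where $v_0$ has degree $\sigma_{v_0}+p$; it therefore contributes $\tfrac{\sigma_{v_0}}{\sigma_{v_0}+p}\,Q^B_{v_0,v}/Q^B_{v_0}$, not $Q^B_{v_0,v}/Q^B_{v_0}$ as you wrote. (This reweighted leg is exactly the paper's intermediate identity \eqref{eqn:Quseful}, $Q^G_{v_0,v}=Q^G_{v_0}\tfrac{\sigma_{v_0}}{\sigma_{v_0}+p}Q^B_{v_0,v}/Q^B_{v_0}$, which is the workhorse for both (ii) and (iii).) With your expression the through-$v_0$ term combines with the avoidance term to give
\beq
Q^B_{u,v}+\frac{p\,(Q-Q^B_{v_0})}{Q^B_{v_0}}\cdot\frac{Q^B_{u,v_0}Q^B_{v_0,v}}{\sigma_{v_0}Q+pQ^B_{v_0}}\,,
\eeq
which is not \eqref{eqn:QG3}; with the factor $\tfrac{\sigma_{v_0}}{\sigma_{v_0}+p}$ restored the bracket collapses to $-pQ^B_{v_0}/(\sigma_{v_0}Q+pQ^B_{v_0})$ and the stated formula follows. (The first leg $u\to v_0$ needs no such correction, since all its steps are taken at vertices of unchanged degree.) A smaller slip of the same kind occurs in (iii): the collapsed tree-traversal factor is $\tfrac{q}{\sigma_{v_0}+p}(z^{-1}P)^{d_G(v_0,v)}$ with no residual $\bigl(1-\tfrac{q-1}{q}P\bigr)^{-1}$ --- that resummation is already absorbed into each factor $z^{-1}P$ via \eqref{app1} --- and only with this form does multiplying by $Q^G_{u,v_0}=(Q^B_{u,v_0}/Q^B_{v_0})\,Q^G_{v_0}$ reproduce \eqref{eqn:QG2}. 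These are bookkeeping errors rather than a wrong approach, but they are precisely the ``degree factor'' consistency you flagged as the main obstacle, and as written the computations in (ii) and (iii) do not yield the stated identities.
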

\begin{rem}
Note that, since 
\begin{equation}
    \sigma_u Q^G_{u,v}(z)=\sigma_v Q^G_{v,u}(z)\, \label{eqn:Qidentity}
\end{equation} 
holds for any graph $G$,  the lemma is sufficient to determine $Q^G_{u,v}(z), \forall u,v\in V(G)$ given 
$Q^B_{u,v}(z), \forall u,v\in V(B)$.
\end{rem}
\begin{proof}
First note that walks starting from and returning to $v_0$ consist of repeated excursions into either $B$ or a tree which gives
\begin{eqnarray}
     Q^G_{v_0}(z)&=&\frac{1}{1-p\frac{P(z)}{\sigma_{v_0}+p}-\frac{\sigma_{v_0}P^B_{v_0}(z)}{\sigma_{v_0}+p}}\,,
     \end{eqnarray}
     where $P^B_{v_0}(z)$ is given by \eqref{rw3}.
     Statement (\emph{i})
           follows by \eqref{rw4}.

To prove  statement (\emph{ii}) we first need the identity
\begin{eqnarray}
     Q^G_{v_0,v}(z)&=&Q^G_{v_0}(z)\frac{\sigma_{v_0}}{\sigma_{v_0}+p}\frac{Q^B_{v_0,v}(z)}{Q^B_{v_0}(z)}\nonumber\\
     &=& \frac{\sigma_{v_0}Q^B_{v_0,v}(z)Q(z)}{\sigma_{v_0}Q(z) + p\,Q^B_{v_0}(z)}\,,\quad v\in V(B)\setminus\{v_0\}\,,\label{eqn:Quseful}
\end{eqnarray}
 which is easily obtained by decomposing a walk contributing to the left-hand side into a (possibly trivial) 
 walk from $v_0$ and back to $v_0$ and a walk from $v_0$ to $u$ that does not return to $v_0$, and then using \eqref{eqn:Qv0map}. The statement for $u=v_0$ follows by rearranging \eqref{eqn:Quseful}.
 For $u\ne v_0$
separate the contributing walks into those that do, or do not, visit $v_0$ gives
\begin{eqnarray}
Q^G_{u,v}(z)&=&\frac{Q^B_{u,v_0}(z)Q^G_{v_0,v}(z)}{Q^B_{v_0}(z)}+\left( Q^{B}_{u,v}(z) -\frac{Q^B_{u,v_0}(z)Q^B_{v_0,v}(z)}{Q^B_{v_0}(z)}\right)\nonumber \\
&=&Q^{B}_{u,v}(z) -  \frac{p\,Q^{B}_{u,v_0}(z)Q^B_{v_0,v}(z)}{\sigma_{v_0}Q(z) + p\,Q^B_{v_0}(z)}\,,\end{eqnarray}
where we have used \eqref{eqn:Quseful} in the second step. This completes the proof of statement (\emph{ii}).

Next consider the case when   $u\in V(B)$ and $v\in G\setminus B$. Decompose the contributing walks  into a component that starts at $u$ and ends at $v_0$, followed by a walk from $v_0$ to $v$ without revisiting $v_0$.  Applying the same reasoning that led to \rf{app6}, but noting that $v_0\in G$ has degree $\sigma_{v_0}+p$ rather than  $q$, we obtain
\begin{eqnarray}
Q^G_{u,v}(z)&=& Q^G_{u,v_0}(z) \frac{q}{\sigma_{v_0}+p} \left(z^{-1}P(z)\right)^{d_G(v_0,v)}\,.
\end{eqnarray}
Using \eqref{eqn:Qv0map} if $u=v_0$, respectively \eqref{eqn:Qidentity} and \eqref{eqn:Quseful} if $u\ne v_0$, we obtain \eqref{eqn:QG2} which proves statement (\emph{iii}). 

Denote by $\Bbar$ the subgraph of $G$ obtained by removing all of $B$ except the vertex $v_0$, i.e.  $\Bbar=G\setminus \{B\setminus\{v_0\}\}$.
Then, for $u,v\in G\setminus B$,
separate the contributing walks into those that do, or do not, visit $v_0$ to obtain
\begin{eqnarray}
Q^G_{u,v}(z)&=&\frac{Q^\Bbar_{u,v_0}(z)Q^G_{v_0,v}(z)}{Q^\Bbar_{v_0}(z)}+\left( Q^{\Bbar}_{u,v}(z) -\frac{Q^\Bbar_{u,v_0}(z)Q^\Bbar_{v_0,v}(z)}{Q^\Bbar_{v_0}(z)}\right)\,. \label{eqn:work0}
\end{eqnarray}
To evaluate the first term 
note that, by \eqref{eqn:Qidentity} and then \eqref{app6}, 
\begin{eqnarray}
\frac{Q^\Bbar_{u,v_0}(z)}{Q^\Bbar_{v_0}(z)}=\frac{p}{q}\frac{Q^\Bbar_{v_0,u}(z)}{Q^\Bbar_{v_0}(z)}=\frac{p}{q}\frac{\frac{q}{p}Q_{v_0,u}(z)}{Q(z)}=\left(z^{-1}P(z)\right)^{d_G(v_0,u)}.
\end{eqnarray}
Combining this with  \eqref{eqn:QG2} gives
\begin{eqnarray}
 \frac{q\,Q(z)Q^B_{v_0}(z)}{\sigma_{v_0}Q(z) + p\,Q^B_{v_0}(z)} \left(z^{-1}P(z)\right)^{d_G(v_0,v)+d_G(v_0,u)}\,.\label{eqn:work1}
\end{eqnarray}
The second term in \eqref{eqn:work0}
occurs only when $u,v$ are in the same planted tree $T_q'$. Since the contributing walks to the term do not reach $v_0$, we can compute it by replacing $\Bbar$ with the regular Cayley tree obtained by identifying the root of $T_q'$ with the root of the semi-infinite Cayley tree $S_q$.  Using \eqref{app6} we then obtain
\begin{eqnarray}
    Q(z)\left(\left(z^{-1}P(z)\right)^{d_G(u,v)}-\left(z^{-1}P(z)\right)^{d_G(v_0,v)+d_G(v_0,u)}\right)\,.
\end{eqnarray}
Combining the two contributions proves statement (\emph{iv}).
\end{proof}

\subsection{The spectrum of asymptotic Cayley trees}

{Here we study 
the spectrum of an asymptotic Cayley tree $G$.  
Our strategy is to deploy Lemma \ref{lemma:resolventmap} on 
the sequence $\mathcal G$ \eqref{eqn:Gsequence}. 
By grafting trees to one vertex at a time,  we can follow the evolution of 
the spectrum from one member of the sequence to its successor.}

\begin{thm} \label{thm:continuous}
The spectrum of an asymptotic Cayley tree $G$ of degree $q$ consists  of an absolutely 
continuous part  $[-\frac{a}{q}, \frac{a}{q}]$ and a finite set of eigenvalues $S^G_{pp}$ (the 
pure point  spectrum).
\end{thm}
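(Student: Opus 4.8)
The plan is to prove the theorem by induction along a $G$-sequence $\mathcal G = \{G^{(0)}, G^{(1)}, \ldots, G^{(n)} = G\}$, using Lemma \ref{lemma:resolventmap} to track how the spectral measure changes at each grafting step. The base case is $G^{(0)}$, a finite graph, whose transition matrix has purely discrete (finite) spectrum; the inductive step passes from $G^{(j)}$ to $G^{(j+1)}$ by grafting $p$ copies of $T'_q$ at a single vertex $v_0$. So the heart of the matter is a one-step statement: if $B$ has spectrum consisting of the absolutely continuous band $\cut$ (with the band absent when $B$ is finite) plus finitely many eigenvalues, then the same holds for the graph $G$ obtained by grafting planted trees at one vertex $v_0$.

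For the inductive step I would work with the matrix elements of the resolvent, recalling that $\lambda^{-1} Q^G_{u,v}(\lambda^{-1})$ is a matrix element of $(\lambda - K^G)^{-1}$, and use the standard correspondence (Reed--Simon IV, Sec.~XIII.6, already cited) between boundary values of the resolvent on the real axis and the spectral decomposition. The key inputs from Lemma \ref{lemma:resolventmap} are: (a) $Q^G_{u,v}$ is, by formulas \eqref{eqn:Qv0map}--\eqref{eqn:QG1}, a fixed rational function of the finite collection of data $Q^B_{v_0}(z)$, $Q^B_{u,v_0}(z)$, $Q^B_{v_0,v}(z)$, $Q^B_{u,v}(z)$ together with the known tree functions $Q(z)$ and $P(z)$; and (b) the only ``new'' analytic structure introduced is that of $Q(\lambda^{-1})$ and $\lambda P(\lambda^{-1})$, which by the discussion in Section \ref{subsec:treeandspectrum} are analytic off the cut $\cut$ and have a nonvanishing imaginary part exactly on $\cut$ as given by \eqref{eqn:spectral}. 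I would argue separately on three regions of the real $\lambda$-axis. Away from $\cut$ and away from the (finitely many) zeros of the common denominator $D_p(z) := \sigma_{v_0} Q(z) + p\, Q^B_{v_0}(z)$: the right-hand sides of \eqref{eqn:Qv0map}--\eqref{eqn:QG1} are built from functions analytic and real there, so $Q^G$ extends analytically and is real, i.e.\ those $\lambda$ are in the resolvent set. The finitely many real zeros of $D_p$ at which a genuine pole of $Q^G$ survives contribute the pure point spectrum $S^G_{pp}$ — and since $B$ has finitely many eigenvalues plus a band, and $D_p$ is a ratio of entire-on-the-cut-complement functions with at most finitely many zeros outside $\cut$, this set is finite. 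On the open interval $(-\frac aq,\frac aq)$: here $\mathrm{Im}\, Q(\lambda^{-1}) \neq 0$, and one checks from \eqref{eqn:Qv0map} that $\mathrm{Im}\, Q^G_{v_0}((\lambda+i0)^{-1})$ exists and is strictly positive for a.e.\ $\lambda$ in the band (the denominator cannot vanish there because its imaginary part is controlled by that of $Q(\lambda^{-1})$ unless $Q^B_{v_0}$ also has the right boundary behaviour, which one rules out), so $v_0$ still has an a.c.\ component filling $\cut$; propagating this through \eqref{eqn:QG3}--\eqref{eqn:QG1} shows every vertex does, while the boundary values remain finite so no singular continuous or point mass sits inside the band. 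Finiteness of $S^G_{pp}$ at the level of $G$ then follows because the spectrum of $G^{(0)}$ is finite and each grafting step adds only finitely many new eigenvalues (at zeros of the respective $D_p$).

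The main obstacle I expect is the analysis at the endpoints $\lambda = \pm\frac aq$ and, more delicately, ensuring that the zeros of the denominator $D_p(z)$ on the real axis are correctly bookkept: a zero of $D_p$ could be cancelled by a zero of a numerator factor (so does not produce an eigenvalue), or could coincide with an existing eigenvalue of $B$ (changing multiplicities rather than adding a point), or could even approach the band edge. Handling this cleanly requires knowing that $Q^B_{v_0}(z)$ has, off the cut, only poles (at eigenvalues of $B$) and is otherwise real-analytic with controlled sign/monotonicity between consecutive poles — an interlacing-type argument — which is exactly what lets one count the real zeros of $D_p$ and conclude they are finite in number and locally finite near $\pm\frac aq$. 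I would isolate this as a lemma about the zeros and poles of $D_p$ before assembling the three-region argument; the remaining verifications (existence of boundary values, reality off the cut, positivity of the spectral density inside the band) are then routine consequences of the explicit formulas \eqref{app2}, \eqref{app3}, \eqref{eqn:spectral} and Lemma \ref{lemma:resolventmap}.
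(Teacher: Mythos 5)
Your proposal follows essentially the same route as the paper: induction along a $G$-sequence, using Lemma \ref{lemma:resolventmap} to propagate the analytic structure of $\lambda^{-1}Q_{u,v}(\lambda^{-1})$, identifying $S^G_{pp}$ with the finitely many real zeros of the denominator $\sigma_{v_0}Q + p\,Q^B_{v_0}$ outside the cut, and invoking the Reed--Simon boundary-value criterion to exclude singular continuous spectrum. The only notable difference is that where you propose an interlacing/monotonicity lemma to control the zeros of the denominator (essentially the paper's Lemma \ref{lemma:Qpolesinz}, which it reserves for Theorem \ref{thm:point}), the paper obtains finiteness more cheaply from the closed form $Q^{G^{(i)}}_{u,v}(z)=A^{G^{(i)}}_{u,v}(z)+\tilde A^{G^{(i)}}_{u,v}(z)\sqrt{q^2-a^2z^2}$ with $A,\tilde A$ rational, which is preserved at each grafting step.
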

\begin{proof}
Let $\mathcal G$ be any $G$-sequence as in \eqref{eqn:Gsequence} 
and assume for some index  $i$ and arbitrary vertices $u,v$ in $G^{(i)}$ that $Q^{G^{(i)}}_{u,v}$ has the form
\begin{equation}\label{eqn:Qform}
    Q^{G^{(i)}}_{u,v}(z)= A^{G^{(i)}}_{u,v}(z) + \tilde A^{G^{(i)}}_{u,v}(z)\,\sqrt{q^2-a^2z^2}\,,
\end{equation}
where $A^{G^{(i)}}_{u,v}$ and ${\tilde A}^{G^{(i)}}_{u,v}$ are rational functions of $z$. Since $Q(z)$ and $z^{-1}P(z)$ are of this form it follows from Lemma \ref{lemma:resolventmap}  applied to $B=G^{(i)}$, that the same holds for $Q^{G^{(i+1)}}_{u,v}(z)$. Recalling that $G^{(0)}$ is a finite graph, we have that $Q^{G^{(0)}}_{u,v}$ is a rational function and it follows by repeating the argument that \eqref{eqn:Qform} also holds for $G^{(n)} =G$. 
 
 Likewise, it follows from the identities of Lemma \ref{lemma:resolventmap} that if $Q^{G^{(i)}}_{u,v}$ is finite outside a finite set 
 {$F_i\subset \mathbb C$}, independent of $u,v$, then $Q^{G^{(i+1)}}_{u,v}(z)$ is finite if $z$ is not in $F_i$ and  $p Q^{G^{(i)}}_{v_0}(z) + \sigma_{v_0} Q(z)\neq 0$, where $v_0$ is the graft vertex in $G^{(i)}$ and $p$ is the number of trees grafted at $v_0$. In view of \eqref{eqn:Qform}, the equation \beq
 p Q^{G^{(i)}}_{v_0}(z) + \sigma_{v_0} Q(z)=0
 \eeq
 has finitely many solutions. Thus $Q^{G^{(i+1)}}_{u,v}(z)$ is finite outside a finite set $F_{i+1}$. By repeating this argument we conclude that the poles of the functions $A^G_{u,v}$ and  $\tilde A^G_{u,v}$ are contained in a finite set $F$ independent of $u,v$. In particular, it follows that the limit $\lim_{\epsilon\downarrow 0} (\lambda + i\epsilon)^{-1}Q^G_{u}((\lambda + i\epsilon)^{-1})$ exists outside $F$ for all vertices $u$, which rules out the presence of a singular continuous spectrum (see, e.g., the Proposition in Section XIII.6 of \cite{ReedSimonIV}). The remaining singularities of $\lambda^{-1}Q^G_{v_0}(\lambda^{-1})$ are then  poles contained in $F$. Thus we have that 
 \begin{equation}\label{eqn:Q_uform}
    \lambda^{-1}Q^G_u(\lambda^{-1})= B^G_{u}(\lambda) + \tilde B^G_{u}(\lambda)\,\sqrt{q^2\lambda^2-a^2}\,,
\end{equation}
 where $B^G_u$ and $\tilde B^G_u$ are rational functions of $\lambda$ that are real-valued on 
 the real axis (by the self-adjointness of $K^G$) and $\tilde B^G_u$ has no poles 
 in $[-\frac{a}{q},\frac{a}{q}]$, which therefore equals the continuous spectrum 
 with the spectral density of the state $|u\kt$ given by
 \beq
 \rho_u(\lambda) = - \frac{1}{\pi} \tilde B^G_u(\lambda)\sqrt{a^2-q^2\lambda^2}\,.
 \eeq
 This completes the proof.
\end{proof}
 The characterization of the spectrum given by Theorem \ref{thm:continuous} and 
 its proof implies, by the spectral theorem, that for $\lambda\notin S^G_{pp}$ we have 
\begin{equation}\label{eqn:specrepQ}
\lambda^{-1}Q^G_{u,v}(\lambda^{-1}) = \sum_{\mu\in S^G_{pp}} 
\frac{\br u|e_{\mu}|v\kt}{\lambda-\mu} -\frac{1}{\pi} \int_{-\frac aq}^{\frac aq}
\frac{{\tilde B}^G_{u,v}(\mu)\sqrt{a^2-q^2\mu^2}}{\lambda - \mu}\,d\mu\,,    
\end{equation}
where $e_\mu$ denotes the spectral projection onto the eigenspace of $K^G$ corresponding to $\mu$
and ${\tilde B}^G_{u,v}(\mu)$ is a rational function of $\mu$ with no poles in  $\cut$.  
For an arbitrary, say continuous, function $f$ on the spectrum of $G$ the 
 following generalisation of \eqref{eqn:CayleySpecRep} holds:
\begin{equation}\label{eqn:specrepgen}
\br u|f(K^G)|v\kt = \sum_{\mu\in S^G_{pp}} f(\mu) \br u |e_{\mu}|v\kt  - 
\frac{1}{\pi}\int_{-\frac aq}^{\frac aq} f(\mu) {\tilde B}^G_{u,v}(\mu)\sqrt{a^2-q^2\mu^2}
\,d\mu\,.     
\end{equation}

\medskip

We next establish some basic estimates for the multiplicity of the individual eigenvalues of $K^G$. 
Denote by ${\hat S}^G_0$ the set of eigenvalues of 
$K^{G^{(0)}}$ that have at least one non-trivial eigenfunction vanishing on the set $V^G_0$ of 
graft vertices in $G^{(0)}$. Moreover, 
denote the subspace of such eigenfunctions with eigenvalue $\lambda$ by $E^G_0(\lambda)$, and 
its dimension by $d^G_0(\lambda)$.

\begin{thm}\label{thm:decoupled}
Let $G$ be an asymptotic Cayley tree of degree $q$. All eigenvalues $\lambda \in S^G_{pp}$ have finite multiplicity and fulfill:
\begin{itemize}
\item[(i)] If $\lambda \in [-\frac{a}{q}, \frac{a}{q}]$, then $\lambda\in \hat S^G_0$ and the multiplicity 
of $\lambda$ equals $d^G_0(\lambda)$;
\item[(ii)] If $\lambda \notin [-\frac{a}{q}, \frac{a}{q}]$, then the multiplicity of $\lambda$ is at most $\sharp V^G_0 + d^G_0(\lambda)$, where $\sharp V$ denotes the number of vertices in $V$.
\end{itemize}
\end{thm}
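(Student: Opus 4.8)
I would prove both parts by analysing the $\ell^2$-eigenfunctions of $K^G$ directly, the essential input being the following eigenfunction-level analogue of Lemma~\ref{lemma:resolventmap}, describing the effect of a single grafting step. Let $T_q'$ have root $v_0$ and let $v_1$ be its unique neighbour, and let $H(\lambda)$ be the space of $f\in\ell^2(T_q')$ with $(K^{T_q'}f)(u)=\lambda f(u)$ for all $u\neq v_0$. Then $H(\lambda)=\{0\}$ for $\lambda\in\cut$, whereas for real $\lambda\notin\cut$ the space $H(\lambda)$ is one-dimensional and spanned by the radial function $\phi_\lambda$, $\phi_\lambda(u)=\big(\lambda P(\lambda^{-1})\big)^{d(v_0,u)}$; here $\phi_\lambda(v_0)=1$, and $\phi_\lambda\in\ell^2(T_q')$ because $|\lambda P(\lambda^{-1})|<(q-1)^{-1/2}$ off the band, as one checks from \eqref{app2}.

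\textbf{The single-step statement.} For $f\in H(\lambda)$ the vector $(K^{T_q'}-\lambda)f$ is supported on $\{v_0\}$, so $(K^{T_q'}-\lambda)f=c\,|v_0\kt$ for some $c\in\mathbb C$. Project $f$ onto the cyclic subspace $\mathcal H_0$ of $|v_0\kt$ and its complement; since $K^{T_q'}$ has purely absolutely continuous spectrum $\cut$ (classical, and recoverable from the explicit resolvents — in particular an excursion decomposition as in \eqref{app1} gives $Q^{T_q'}_{v_0}=Q$, so by \eqref{eqn:spectral} the spectral measure of $|v_0\kt$ is absolutely continuous with density $\tfrac{1}{2\pi}\sqrt{a^2-q^2\mu^2}/(1-\mu^2)$), the component of $f$ in $\mathcal H_0^\perp$ is an eigenfunction of $K^{T_q'}$, hence $0$, so $f\in\mathcal H_0$. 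In the spectral representation $\mathcal H_0\cong L^2(d\rho_{v_0})$, $|v_0\kt\cong1$, the equation becomes $(\mu-\lambda)\hat f(\mu)=c$, and $\hat f(\mu)=c(\mu-\lambda)^{-1}$ lies in $L^2(d\rho_{v_0})$ iff $c=0$ or $\lambda\notin\cut$; when $c=0$ this forces $f=0$. For $\lambda\notin\cut$ we get $f=c(K^{T_q'}-\lambda)^{-1}|v_0\kt$, whose value at $u$ is $-c\lambda^{-1}Q^{T_q'}_{v_0,u}(\lambda^{-1})=-c\lambda^{-1}Q(\lambda^{-1})\big(\lambda P(\lambda^{-1})\big)^{d(v_0,u)}$ by \eqref{app6} and the remark after it; since $Q(\lambda^{-1})\neq0$ off the band this is non-zero at $v_0$, proving the claim. (That $\phi_\lambda\in H(\lambda)$ is checked directly: $\lambda P(\lambda^{-1})$ is a root of $(q-1)x^2-q\lambda x+1=0$, which is exactly the radial recursion.)

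\textbf{Globalisation.} Let $G$ have core $G^{(0)}$, graft vertices $V^G_0$, and $p_j\geq1$ copies of $T_q'$ grafted at $w_j\in V^G_0$; write $\sigma_j=\sigma^{G^{(0)}}_{w_j}$, and let $\psi\in\ell^2(G)$ satisfy $K^G\psi=\lambda\psi$. The restriction of $\psi$ to any grafted tree, which is rooted at its graft vertex $w_j$, lies in $H(\lambda)$. If $\lambda\in\cut$ this restriction vanishes, so $\psi(w_j)=0$ for every $j$; the equation of $K^G$ at the vertices of $G^{(0)}$ then forces $\eta:=\psi|_{G^{(0)}}$ to be an eigenfunction of $K^{G^{(0)}}$ with eigenvalue $\lambda$ vanishing on $V^G_0$, i.e. $\eta\in E^G_0(\lambda)$, and conversely each such $\eta$ extends by zero to an eigenfunction of $K^G$. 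Hence the $\lambda$-eigenspace is isomorphic to $E^G_0(\lambda)$, so it is finite-dimensional ($G^{(0)}$ being finite), its dimension equals $d^G_0(\lambda)$, and $\lambda\in\hat S^G_0$; this is (i). If instead $\lambda\notin\cut$, the restriction equals $\psi(w_j)\phi_\lambda$, so $\psi$ is determined by $\eta$ and $\psi\mapsto\eta$ is injective on the eigenspace; substituting $\psi(v_1^{(j,k)})=\eta(w_j)\lambda P(\lambda^{-1})$ into the $K^G$-equation at $w_j$ shows $\eta$ ranges exactly over
\[
W'(\lambda)=\big\{\eta\in\ell^2(G^{(0)}):(K^{G^{(0)}}\eta)(u)=\lambda\eta(u)\ (u\notin V^G_0),\ (K^{G^{(0)}}\eta)(w_j)=\mu_j(\lambda)\eta(w_j)\ (\forall j)\big\},
\]
with $\mu_j(\lambda)=\lambda+\tfrac{p_j}{\sigma_j}\big(\lambda-\lambda P(\lambda^{-1})\big)$; the converse inclusion follows by running the computation backwards (using $\phi_\lambda\in\ell^2$). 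Thus the eigenspace is isomorphic to $W'(\lambda)$, which is finite-dimensional, and the map $W'(\lambda)\to\mathbb C^{V^G_0}$, $\eta\mapsto(\eta(w_j))_j$, has kernel exactly $E^G_0(\lambda)$: any $\eta\in W'(\lambda)$ vanishing on $V^G_0$ satisfies $(K^{G^{(0)}}\eta)(w_j)=\mu_j(\lambda)\cdot0=\lambda\eta(w_j)$, hence is a genuine eigenfunction of $K^{G^{(0)}}$ vanishing on $V^G_0$, and the reverse inclusion is immediate. By rank–nullity $\dim W'(\lambda)\leq\sharp V^G_0+d^G_0(\lambda)$, which is (ii).

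\textbf{Main obstacle.} The delicate point is the single-step statement, specifically that for \emph{every} $\lambda$ in the \emph{closed} band $\cut$ the relaxed equation $(K^{T_q'}-\lambda)f=c\,|v_0\kt$ has no non-zero $\ell^2$-solution: this rests on the absence of eigenvalues of $K^{T_q'}$ together with the divergence $\int(\mu-\lambda)^{-2}\,d\rho_{v_0}(\mu)=\infty$, the latter to be verified also at the band edges $\lambda=\pm a/q$, where $\rho_{v_0}$ vanishes like a square root but the integral still diverges. Everything downstream is finite-dimensional linear algebra on $G^{(0)}$.
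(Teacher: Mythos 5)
Your overall architecture is sound and genuinely different from the paper's. The paper reaches the key dichotomy (Lemma \ref{lemma:sphericalLe}) by spherically averaging an eigenfunction of $K^G$ over the grafted tree, solving the resulting three-term radial recursion explicitly, and --- for in-band $\lambda$ --- gluing $q$ copies of the restricted eigenfunction at their roots to manufacture an eigenfunction of $K^{T_q}$, whose absence of point spectrum was established in Section \ref{subsec:treeandspectrum}. You instead treat the relaxed equation $(K^{T_q'}-\lambda)f=c\,|v_0\kt$ as a rank-one problem in the cyclic subspace of the root. The genuine gap is in your disposal of $f_\perp$: you need $K^{T_q'}$ to have \emph{no eigenvalues at all} on $\ell^2(T_q')$, in particular none supported in $\mathcal H_0^\perp$, but the justification you offer --- $Q^{T_q'}_{v_0}=Q$, hence the spectral measure of $|v_0\kt$ is absolutely continuous --- controls only the cyclic subspace and says nothing about its orthocomplement. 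The assertion that $K^{T_q'}$ has purely absolutely continuous spectrum is of essentially the same depth as the statement you are using it to prove, and the paper establishes it only for $T_q$, not for $T_q'$. The hole is closable by the paper's own gluing trick: $f_\perp(v_0)=0$ by orthogonality, so placing $c_i f_\perp$ on $q$ copies of $T_q'$ identified at their roots, with $\sum_i c_i=0$ and not all $c_i$ zero, produces an $\ell^2$ eigenfunction of $K^{T_q}$, which must vanish. As written, though, this step is missing, and without it both the in-band vanishing and the one-dimensionality of $H(\lambda)$ off the band (where you implicitly need real $\lambda\notin\cut$ to lie in the resolvent set of $K^{T_q'}$) are unproven.

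Beyond that there are normalization slips, none of which damages the dimension counts. The restriction of an eigenfunction $\psi$ of $K^G$ to a grafted tree does not literally lie in your $H(\lambda)$: the eigenvalue equation at $v_1$ in $G$ carries $\psi(w_j)/(\sigma_j+p_j)$, whereas $K^{T_q'}$ carries $f(v_0)/1$, so the root value must be rescaled; correspondingly the radial profile, your substitution $\psi(v_1^{(j,k)})=\eta(w_j)\lambda P(\lambda^{-1})$, and your formula for $\mu_j(\lambda)$ all acquire degree factors (a root-step discrepancy that is also present in the paper's own \eqref{eqn:expdecay} whenever the graft vertex does not have degree $q$ in $G$). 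Likewise the conditions defining $W'(\lambda)$ should be written with $A^{G^{(0)}}$ normalized by degrees in $G$ rather than with $K^{G^{(0)}}$, since a neighbour that is itself a graft vertex has different degrees in $G$ and $G^{(0)}$. The saving grace is that every discrepancy is proportional to values of $\eta$ on $V^G_0$, so the kernel of $\eta\mapsto(\eta(w_j))_j$ is still exactly $E^G_0(\lambda)$ and rank--nullity still yields the bound $\sharp V^G_0+d^G_0(\lambda)$; indeed your $W'(\lambda)$ bookkeeping is a clean linear-algebra alternative to the paper's construction of a staircase family $\phi_1,\dots,\phi_N$ of eigenfunctions of $K^G$ itself. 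Part (i) and the finiteness of multiplicities go through once the single-step statement is secured.
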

\noindent For the proof of this result we need the following lemma.
\begin{lem}\label{lemma:sphericalLe}
Let $G$ be an asymptotic Cayley tree with a subtree $T_q'$ grafted at vertex $v_0$, and assume $\phi\in \ell^2(G)$ 
is an eigenfunction of $K^G$ with eigenvalue $\lambda$. Then the following statements hold:
\begin{itemize}
    \item[(i)] If $\lambda\in \cut $, then $\phi$ vanishes on $T'_q$;
    \item[(ii)] If $\lambda\notin \cut $, then $\phi$ is spherical in $T'_q$, i.e.,
    its value at $v\in T'_q$ depends only on the height $d(v_0,v)$,  and  is given by  
    \begin{align}
         \phi(v) = A\, \xi^{-d(v_0,v)}\,,\quad v\in T'_q\,,\label{eqn:expdecay}
    \end{align}
    where $\xi^{-1}= \lambda P(\lambda^{-1}) $ 
    and $A=\phi(v_0)$.  
\end{itemize}
\end{lem}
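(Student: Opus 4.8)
The plan is to analyse the eigenvalue equation $K^G\phi=\lambda\phi$ on the grafted tree alone. Write $S_q=T'_q\setminus\{v_0\}$ and let $v_1$ be the root of $S_q$, i.e.\ the unique neighbour of $v_0$ lying in $S_q$. Every vertex of $S_q$ has degree $q$ in $G$, and the only edge joining $S_q$ to the rest of $G$ is $\{v_0,v_1\}$; hence, putting $g=\phi|_{S_q}$, the relations $(K^G\phi)(u)=\lambda\phi(u)$ for $u\in S_q$ collapse to the single identity
\beq\label{eqn:plan-op}
\big(q\lambda-A_{S_q}\big)\,g\;=\;q\,\sigma_{v_0}^{-1}\,\phi(v_0)\,|v_1\kt\qquad\text{in }\ell^2(S_q),
\eeq
where $A_{S_q}$ is the adjacency operator of $S_q$. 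The one external fact I would use is $\|A_{S_q}\|\le a$: deleting a single edge of $T_q$ realises $S_q$ as a connected component, hence an \emph{induced} subgraph, of $T_q$, so $A_{S_q}$ is a compression of $A_{T_q}=q\,K^{T_q}$, whose norm equals $a$ because $K^{T_q}$ has spectrum $\cut$ (Section~\ref{subsec:treeandspectrum}).

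I would treat (ii) first. If $\lambda\notin\cut$ then $q|\lambda|>a\ge\|A_{S_q}\|$, so $q\lambda-A_{S_q}$ is invertible and $g=q\sigma_{v_0}^{-1}\phi(v_0)\,(q\lambda-A_{S_q})^{-1}|v_1\kt$. Since $v_1$ is the root of the tree $S_q$, the excursion decomposition used to obtain \eqref{app6} applies verbatim and shows that $(q\lambda-A_{S_q})^{-1}_{v_1,v}$ depends only on $d(v_1,v)=d(v_0,v)-1$ and is geometric in it; thus $\phi$ is spherical on $S_q$. Substituting $\phi(v)=c\,r^{\,d(v_0,v)}$ into the recursion at any vertex of $S_q$ at distance $\ge2$ from $v_0$ forces $(q-1)r^{2}-q\lambda r+1=0$; square-summability of $g$ selects the root with $(q-1)|r|^{2}<1$, and by \eqref{app1}--\eqref{app2} this is exactly $r=\lambda P(\lambda^{-1})=\xi^{-1}$ (the two roots have product $\tfrac{1}{q-1}$, so the other one has modulus $>(q-1)^{-1/2}$). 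The one remaining equation, at $v_1$ — the only one in which $\phi(v_0)$ appears — then determines the amplitude $A$ from $\phi(v_0)$ as in \eqref{eqn:expdecay}; at $v=v_0$ the formula is trivial.

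For (i), $\lambda\in\cut$, the obstacle is that $q\lambda\in[-a,a]$ now lies in the spectrum of $A_{S_q}$, so \eqref{eqn:plan-op} can no longer be inverted. Here I would bring in the spectral data of $A_{S_q}$ at $v_1$, obtained by the same first-return generating-function computation as in Section~\ref{subsec:treeandspectrum} (the analogue of \eqref{eqn:spectral}): the spectral measure of every $|u\kt$, $u\in S_q$, is purely absolutely continuous, with density strictly positive on $(-a,a)$ and vanishing like a square root at $\pm a$. Hence $A_{S_q}$ has no point spectrum and $|v_1\kt\notin\mathrm{Ran}\,(q\lambda-A_{S_q})$ whenever $q\lambda\in[-a,a]$. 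If $\phi(v_0)\neq0$, \eqref{eqn:plan-op} would place $|v_1\kt$ in that range, a contradiction; so $\phi(v_0)=0$, and then $(q\lambda-A_{S_q})g=0$ gives $g=0$. Thus $\phi$ vanishes on $T'_q$. (Alternatively, sphericality of $\phi$ on $S_q$ and the value of the ratio can be read off directly from the residue of $\lambda^{-1}Q^G_{u,v}(\lambda^{-1})$ at the eigenvalue via Lemma~\ref{lemma:resolventmap}; for $\lambda\in\cut$ the resulting ratio has modulus $(q-1)^{-1/2}$, so the level sums $\sum_{d(v_0,v)=k}|\phi(v)|^{2}$ do not tend to $0$, and square-summability forces $\phi|_{S_q}=0$ and then $\phi(v_0)=0$.)

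The one genuinely delicate step, common to both parts, is that square-summability pins $\phi|_{S_q}$ down to a single spherical decaying mode — equivalently, uniqueness of the $\ell^2$ solution of the boundary-value problem \eqref{eqn:plan-op} — and this is exactly where the tree structure of $S_q$ and the norm bound $\|A_{S_q}\|\le a$ do the work.
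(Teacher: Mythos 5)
Your argument takes a genuinely different route from the paper's. The paper handles both parts with one elementary device: it replaces $\phi$ on the grafted tree by its spherical average around $v_0$ (still in $\ell^2$ by Cauchy--Schwarz, still an eigenfunction), solves the resulting three-term recursion \eqref{le3}, discards the non-decaying root, and for part (i) glues $q$ copies of $T'_q$ at their roots so as to invoke the known emptiness of the point spectrum of $T_q$. You instead isolate the half-tree $S_q$ and recast the eigenvalue equation there as the inhomogeneous resolvent problem $(q\lambda-A_{S_q})g=q\sigma_{v_0}^{-1}\phi(v_0)\,|v_1\kt$, which is correct. For part (ii) this is clean and essentially complete: $\|A_{S_q}\|\le a$ by compression into $T_q$, so for $\lambda\notin\cut$ the operator is boundedly invertible, uniqueness of the $\ell^2$ solution comes for free, and exhibiting the spherical geometric solution with the decaying root $r=\lambda P(\lambda^{-1})$ finishes the argument --- a tidy replacement for the paper's level-by-level root selection. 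One caveat you should not gloss over: your own boundary equation at $v_1$ yields the amplitude $c=q\sigma_{v_0}^{-1}\phi(v_0)$ on $S_q$, which agrees with \eqref{eqn:expdecay} (where $A=\phi(v_0)$) only when $v_0$ has degree $q$ in $G$; the paper's proof makes the same implicit normalization when it extends \eqref{le3} down to $h=1$, so this is a shared issue rather than a defect of your method, but your formulation makes the factor visible and you should state it.

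The genuine gap is in part (i). You need two spectral facts about $A_{S_q}$: that $q\lambda$ is not an eigenvalue (so $(q\lambda-A_{S_q})g=0$ forces $g=0$), and that $|v_1\kt\notin\mathrm{Ran}(q\lambda-A_{S_q})$ (so $\phi(v_0)=0$). You assert both on the grounds that every vertex state of $S_q$ has purely absolutely continuous spectral measure with strictly positive density on $(-a,a)$, but this is established nowhere: the paper computes spectral densities only for $T_q$, and $S_q$ is not regular, so nothing transfers verbatim; for vertices deep in $S_q$ the Green's function is a more complicated algebraic function whose absence of real poles must actually be checked, and absence of point spectrum requires this for \emph{all} vertices. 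The "not in the range" claim also needs its (one-line) justification. Both facts can be recovered from what the paper already proves by embedding into $T_q$: realise $T_q$ as two copies of $S_q$ joined by the edge $(v_0,v_1)$; if $A_{S_q}\psi=\mu\psi$ with $\psi\in\ell^2$ and $\mu\in[-a,a]$, extend $\psi$ by zero across the deleted edge to get $(A_{T_q}-\mu)\Psi=\psi(v_1)|v_0\kt$. If $\psi(v_1)=0$ this contradicts the emptiness of the point spectrum of $T_q$; if $\psi(v_1)\neq 0$ it places $|v_0\kt$ in $\mathrm{Ran}(A_{T_q}-\mu)$, which is impossible because the spectral measure of $\eta=(A_{T_q}-\mu)\Psi$ satisfies $d\nu_\Psi=(x-\mu)^{-2}d\nu_\eta$ and the density \eqref{eqn:spectral} (rescaled to $A_{T_q}=qK^{T_q}$) makes $\int (x-\mu)^{-2}d\nu_\eta$ divergent for every $\mu\in[-a,a]$, including the endpoints where it vanishes only like a square root. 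The same divergence computation is what justifies your range claim for $|v_1\kt$. With these insertions your part (i) closes; without them it rests on an unproven assertion that the paper's own, more elementary argument deliberately avoids.
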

\begin{proof} Denoting by $v_1$ the unique neighbour of $v_0$ in $T'_q$, we define $\psi (v)=\phi (v)$, if $v\notin T'_q$ or if $v=v_0$ or if $v=v_1$, and
\beq\label{le1}
\psi (v)=\frac{1}{ (q-1)^{h-1}}\sum_{w\in T'_q:d(v_0,w)=h} \phi(w)
\eeq
for vertices $v$ at height $h>1$ in  $T_q'$. Then   $\psi$ is a function of $h=d(v_0,v)$ alone for $v\in T'_q$,
and it is easily seen that it fulfils 
\beq\label{eigeq}
K^G\psi (v) = \lambda \psi(v)\,, \quad v\in G\,.
\eeq
Moreover, $\psi$ belongs to $\ell^2(G)$ as a consequence of the following estimate, 
where $\phi_0$ denotes the restriction of $\phi$ to $V(G)\setminus V(T'_q)$:
\begin{eqnarray}
    \|\psi\|_s^2 & = & \| \phi_0\|_s^2 + |\psi(v_0)|^2 +\sum_{h=1}^\infty (q-1)^{-h+1}\Big|
    \sum_{w\in T'_q:d(v_0,w)=h} \phi(w)\Big|^2\nonumber\\
&\leq & \| \phi_0\|_s^2 + |\phi(v_0)|^2 +\sum_{h=1}^\infty \sum_{w\in T'_q:d(v_0,w)=h} |\phi(w)|^2
= \|\phi \|_s^2 \,,\label{le2}
\end{eqnarray}
where we have used  the Cauchy-Schwarz inequality. Hence, $\psi$ is an eigenfunction of $K^G$ with eigenvalue $\lambda$.
 
 Writing $\psi(v)=\psi(h)$ for $v\in T'_q$ at height $h$, the eigenvalue condition \eqref{eigeq} gives 
\beq\label{le3}
(q-1)\psi (h+1)+\psi (h-1)=\lambda q\psi (h)\,,\quad h\geq 1\,.
\eeq
 The general solution to this equation is
\beq\label{le4}
\psi (h)=A\xi_+^{-h}+ B\xi_-^{-h}\,,\quad h\geq 0\,,
\eeq
where $A$ and $B$ are constants and $\xi_\pm$ are the solutions to the quadratic equation 
\beq\label{le5}
\xi^2 -\lambda q\xi +q-1=0\,.
\eeq
For $|\lambda |<a/q$ we find
\beq\label{le6}
\xi_\pm = \sqrt{(q-1)} e^{\pm i\theta},
\eeq
where $\theta$ is real, so $\psi$ cannot be in $\ell^2(G)$ unless it vanishes on $T'_q$. In particular, 
$\phi(v_0)=\psi(v_0)=0$ and $\phi(v_1)=\psi(v_1)=0$. It follows that if we consider $q$ copies of $T'_q$ with 
the restriction of $\phi$ defined on each of them, then by identifying their roots we obtain an eigenfunction 
on the Cayley tree $T_q$ with eigenvalue $\lambda$. But since the pure point spectrum of $T_q$ is empty, it 
follows that $\phi$ vanishes on $T'_q$, thus proving statement  $(i)$ in this case.  A similar argument applies if $\lambda= \pm\frac{a}{q}$, in which case the general solution to \eqref{le3} is $\psi(h) = (A + B\,h)(q-1)^{-\frac h2}$. This completes the proof of 
statement $(i)$.

If $|\lambda |>\frac aq$, then $\xi_{+}\neq \xi_{-}$ are real and $\xi_+\xi_- = q-1$. Hence, exactly 
one of the roots, say $\xi_+$, fulfills $|\xi_+|>\sqrt{q-1}$ and the only solutions in $\ell^2(T'_q)$ 
are of the form $A\,\xi_{+}^{-h}, h\geq 0$. In particular, we have \beq\label{eq:rat}
\phi(v_0) = \psi(v_0) = \xi_+\psi(v_1) = \xi_+\phi(v_1)\,.
\eeq
Applying this argument to any subtree of $T'_q$ spanned by the descendants of any vertex $w_1\neq v_0$ in $T'_q$ 
and its predecessor $w_0$ it follows that $\phi(w_0)=\xi_+\phi(w_1)$. This proves that the restriction of 
$\phi$ to $T'_q$  is of the form $A\,\xi_+^{-d(v_0,v)}$ for $v\in T'_q$. 
Finally, it is straighforward to check that $\xi_+^{-1}=\lambda P(\lambda^{-1})$ by  direct substitution 
in \eqref{le5} and applying \eqref{app1}. 
This proves statement $(ii)$.
\end{proof}

\begin{proof}[Proof of Theorem \ref{thm:decoupled}]
Let $\phi\in \ell^2(G)$ be an eigenfunction of $K^G$ with eigenvalue $\lambda\in[-\frac aq,\frac aq]$. By Lemma 
\ref{lemma:sphericalLe}$(i)$ the function $\phi$ vanishes on all grafted trees $T'_q$, from which it follows 
that the restriction $\phi^{(0)}$ of $\phi$ to $G^{(0)}$ belongs to $E^G_0(\lambda)$. On the other hand, 
extending any $\phi^{(0)}\in E^G_0(\lambda)$ to $V(G)$ by assigning the value $0$ at vertices outside  $G^{(0)}$ we 
obtain an eigenfunction of $K^G$ with eigenvalue $\lambda$. This proves statement $(i)$.

Let $\lambda\in S^G_{pp}\backslash [-\frac aq,\frac aq]$ and denote the graft vertices in $G^{(0)}$ by $v_1,\dots, v_N$, 
where $N=\sharp V^G_0$.  Moreover, let $\phi_i, 1\leq i\leq N,$ denote an eigenfunction of $K^G$ with eigenvalue $\lambda$ 
such that 
\begin{equation}
\phi_i(v_i)\neq 0\quad \mbox{and} \quad \phi_i(v_j) = 0\quad \mbox{for $j<i$}\,,
\end{equation}
if such a function exists. Otherwise, set $\phi_i =0$. By Lemma \ref{lemma:sphericalLe}$(ii)$ the restriction 
of $\phi_i$ to the trees $T'_q$ grafted at $v_i$ is uniquely determined up to multiplication by a constant and 
it vanishes on the trees grafted at any vertex $v_j,\, j<i$. It follows that for any eigenfunction $\phi$ of $K^G$ 
with eigenvalue $\lambda$ there exist coefficients $\mu_1,\dots,\mu_N$ such that $\phi-\mu_1\phi_1-\dots-\mu_N\phi_N$ 
vanishes on all grafted trees $T'_q$. By the same arguments as above it follows that the dimension of the space of 
such functions equals $d^G_0(\lambda)$. Clearly this proves statement $(ii)$.  
\end{proof}

\medskip

Our final result on the spectrum of $G$ provides bounds on the total dimension of eigenspaces of $G$ 
in terms of the same quantity for the core of $G$.
\begin{thm}\label{thm:point}
Let $G$ be an asymptotic Cayley tree and let $d^G(\lambda)$ denote the multiplicity of the eigenvalue 
$\lambda\in S^G_{pp}$, i.e., the dimension of the eigenspace of $K^G$ corresponding to $\lambda$. 
Then the following relations hold: 
\beq
\sum_{\mu\in S^{G^{(0)}}_{pp}\setminus  [-\tfrac{a}{q},\tfrac{a}{q}]} d^{G^{(0)}}(\mu) - 2\,\sharp V^G_0\; \leq  \;\sum_{\mu\in S^G_{pp}\setminus  [-\tfrac{a}{q},\tfrac{a}{q}]} d^G(\mu)\;\leq \;\sum_{\mu\in S^{G^{(0)}}_{pp}\setminus  [-\tfrac{a}{q},\tfrac{a}{q}]} d^{G^{(0)}}(\mu)\,.
\eeq
\end{thm}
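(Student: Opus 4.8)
The plan is to work along a $G$-sequence $\mathcal G = \{G^{(0)}, G^{(1)}, \dots, G^{(n)} = G\}$ and show that at each step $G^{(i)} \to G^{(i+1)}$, the quantity $N_i := \sum_{\mu \in S^{G^{(i)}}_{pp} \setminus \cut} d^{G^{(i)}}(\mu)$ satisfies $N_i - 2p_i \le N_{i+1} \le N_i$, where $p_i$ is the number of trees grafted at the chosen vertex $v_i$ in passing to $G^{(i+1)}$. Summing these inequalities along the sequence, and using that grafting $p$ trees at a vertex is the same as grafting one tree $p$ times as far as the counting is concerned (or, more carefully, tracking $\sharp V^G_0$ correctly), will give the stated bounds with $2\,\sharp V^G_0$ on the left. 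So the whole theorem reduces to a single grafting step, which by Lemma \ref{lemma:resolventmap} we may phrase as: $B$ is an arbitrary asymptotic Cayley tree (or the finite core in the base case), $G$ is obtained by grafting $p$ copies of $T'_q$ at $v_0 \in V(B)$, and we must compare the out-of-continuum pure point multiplicities of $K^B$ and $K^G$.

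For the \emph{upper bound} $N_{i+1} \le N_i$, the idea is that eigenfunctions of $K^G$ with eigenvalue $\lambda \notin \cut$ restrict to something controllable on $B$. By Lemma \ref{lemma:sphericalLe}(ii), any such eigenfunction $\phi$ is spherical and exponentially decaying on each grafted tree, so $\phi$ is determined on all of $G$ by its restriction $\phi^{(B)}$ to $V(B)$ together with the single boundary value $\phi(v_0)$ (which is already part of $\phi^{(B)}$). The restriction $\phi^{(B)}$ need not be an eigenfunction of $K^B$ itself — the eigenvalue equation at $v_0$ picks up the contribution of the grafted trees — but it satisfies the eigenvalue equation of $K^B$ at every vertex except $v_0$; equivalently $\phi^{(B)}$ is an eigenfunction of a rank-one modification of $K^B$ supported at $v_0$. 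Hence the map $\phi \mapsto \phi^{(B)}$ is injective from the $\lambda$-eigenspace of $K^G$ into the $\lambda$-eigenspace of this modified operator $\widetilde K^B$, and by the min-max/interlacing principle for rank-one (more precisely rank-one-per-graft-vertex, but here just one vertex) self-adjoint perturbations, the total out-of-continuum multiplicity can only go down by such a perturbation — one needs here that the continuous spectrum is unchanged (both equal $\cut$ by Theorem \ref{thm:continuous}), so eigenvalues cannot be "created" from the continuum by a finite-rank perturbation in a way that increases the total count. Care is needed because $\widetilde K^B$ involves changing the degree of $v_0$ from $\sigma_{v_0}$ to $\sigma_{v_0} + p$, so one should set this up using the resolvent formula \eqref{eqn:Qv0map}--\eqref{eqn:QG1}: poles of $Q^G_{u,v}$ off the cut are zeros of the common denominator $\sigma_{v_0} Q(z) + p\, Q^B_{v_0}(z)$, and counting these zeros (with multiplicity, via residues) against the poles of $Q^B_{v_0}$ gives the bookkeeping. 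This residue-counting argument is, I expect, the cleanest route to both bounds simultaneously.

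For the \emph{lower bound}, the point is that poles of $Q^B_{v_0}(z) = (1 - P^B_{v_0}(z))^{-1}$ off the cut — i.e. eigenvalues of $K^B$ with $\langle v_0 | e_\mu | v_0\rangle \ne 0$ — force zeros of the denominator $\sigma_{v_0} Q(z) + p\, Q^B_{v_0}(z)$ nearby (as $z$ passes through such a pole, $Q^B_{v_0}$ runs through all real values, in particular through $-\tfrac{\sigma_{v_0}}{p} Q(z)$), hence force eigenvalues of $K^G$. One loses at most two eigenvalues per graft vertex essentially because between consecutive poles of $Q^B_{v_0}$ the denominator has a sign change forced by the pole structure, except possibly at the two ends of the configuration (the intervals adjacent to $\cut$), where the monotonicity argument can fail; this "$-2$" is exactly analogous to the classical interlacing loss at the two endpoints of a Sturm oscillation count. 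To convert "$\langle v_0 | e_\mu | v_0 \rangle \ne 0$" statements into multiplicity statements one combines this with Theorem \ref{thm:decoupled}: eigenvalues with eigenfunctions vanishing at all graft vertices are inherited unchanged (the term $d^G_0(\lambda) = d^{G^{(0)}}_0(\lambda)$ contributes identically on both sides, up to the subtlety that grafting can only increase the set of "graft vertices seen", which must be checked to go through), while for each $\mu \in S^{G^{(0)}}_{pp} \setminus \cut$ we split $d^{G^{(0)}}(\mu)$ into the part visible at graft vertices and the decoupled part and track each.

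The main obstacle I anticipate is the careful multiplicity bookkeeping in the lower bound: Theorem \ref{thm:decoupled}(ii) only gives an \emph{upper} bound $\sharp V^G_0 + d^G_0(\lambda)$ on multiplicities in $G$, not a matching lower bound, so to get a genuine lower bound on $\sum_\mu d^G(\mu)$ one must actually produce eigenfunctions of $K^G$, not merely bound them. I would do this by the resolvent approach: show that the rational function $\sigma_{v_0} Q(\lambda^{-1}) + p\, Q^B_{v_0}(\lambda^{-1})$ (extended off the cut using the analytic branch of $\sqrt{q^2\lambda^2 - a^2}$, with the correct sign so that $Q(\lambda^{-1}) \to 0$ as $|\lambda| \to \infty$) has, counted with multiplicity, at least $(\text{number of poles of } Q^B_{v_0} \text{ off the cut, with multiplicity}) - 2$ zeros off the cut, each of which is a genuine pole of some $Q^G_{u,v}$ hence an eigenvalue of $K^G$ with the right residue being a rank-one positive operator — and that the multiplicity at such a zero is at least the order of the zero, which follows from Lemma \ref{lemma:sphericalLe} plus the structure of the residue. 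The endpoint "$-2$" must be argued from the behavior of the two functions $Q$ and $Q^B_{v_0}$ at the edges of the continuum $\pm \tfrac{a}{q}$, where $Q^B_{v_0}$ is continuous (by Theorem \ref{thm:continuous} applied to $B$) while $Q(\lambda^{-1})$ has a square-root branch point; the sign of $Q(\lambda^{-1})$ just outside each endpoint is fixed, which kills at most one potential sign-change interval at each end.
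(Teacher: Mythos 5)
Your overall route is the paper's own: reduce to one grafting step per graft vertex, detect the new eigenvalues of $K^G$ as the zeros of the common denominator $\sigma_{v_0}Q(z)+p\,Q^B_{v_0}(z)$ appearing in Lemma~\ref{lemma:resolventmap} (equivalently the poles of $\alpha_p$ in \eqref{eqn:alphap}), interlace these against the poles of $Q^B_{v_0}$, and do the multiplicity bookkeeping through whether eigenfunctions vanish at $v_0$ (Lemma~\ref{lemma:sphericalLe}, Theorem~\ref{thm:decoupled}); that is precisely Lemma~\ref{lemma:Qpolesinz} plus the three-case analysis in the paper. However, two steps of your plan are genuinely defective as written. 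First, the reduction: you assert a per-step bound $N_i-2p_i\le N_{i+1}\le N_i$ and claim that summing, together with ``grafting $p$ trees at a vertex is the same as grafting one tree $p$ times,'' yields the stated lower bound. It does not: summing $-2p_i$ gives $-2\sum_i p_i$, which is strictly worse than $-2\,\sharp V^G_0$ whenever some $p_i>1$, and grafting one tree at a time at the same vertex would indeed only give that weaker bound, since each single-tree step can cost $2$. The correct reduction --- and the one your own interlacing paragraph actually supports --- is to graft \emph{all} $p$ trees at a given vertex in a single step and show the loss in that step is at most $2$ \emph{independently of} $p$, because the denominator $\sigma_{v_0}Q+p\,Q^B_{v_0}$ exhibits the same pole/zero pattern for every $p\ge 1$; with one step per graft vertex this gives $2\,\sharp V^G_0$.

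Second, the upper bound. The min-max/rank-one-perturbation principle you invoke (``the total out-of-continuum multiplicity can only go down'') is false in general: a finite-rank perturbation that leaves the essential spectrum unchanged can pull new eigenvalues out of the continuum, and indeed grafting does create new eigenvalues here, at the denominator zeros. What the upper bound really requires is the half of the interlacing statement your sketch never establishes: at most one zero of $\sigma_{v_0}Q+p\,Q^B_{v_0}$ strictly between consecutive poles of $Q^B_{v_0}$, \emph{no} zero in the interval containing $z=0$ (between the innermost negative and positive poles), and at most one in each interval adjacent to $\pm\tfrac qa$. Your intermediate-value argument only yields existence, i.e.\ the lower-bound direction. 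The ``at most'' direction comes from the Herglotz structure in \eqref{eqn:specrepQ}: the residues $\langle v_0|e_\mu|v_0\rangle$ are positive, so $zQ^B_{v_0}(z)$ is strictly increasing on each subinterval between poles, while $-zQ(z)$ is monotone in the opposite sense on $(-\tfrac qa,\tfrac qa)$; hence at most one crossing per interval and none in the central one --- this is exactly Lemma~\ref{lemma:Qpolesinz}. (A smaller inaccuracy: not every pole of $Q^G_{u,v}$ off $[-\tfrac aq,\tfrac aq]$ is a denominator zero; poles of $Q^B_{u,v}$ coming from eigenfunctions vanishing at $v_0$ persist, and you do need to count them separately, as you indicate via Theorem~\ref{thm:decoupled}.) Once these two points are repaired, your residue bookkeeping (losses $n+m$ at the poles of $Q^B_{v_0}$, gains $k+l$ with $k\in\{n-1,n\}$, $l\in\{m-1,m\}$, inherited eigenvalues unchanged) coincides with the paper's proof.
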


\noindent Before proving this result it is convenient to establish the following lemma. 
\begin{lem}\label{lemma:Qpolesinz}
Let $G$ be an asymptotic Cayley tree obtained by grafting $p\geq 1$ trees isomorphic to $T'_q$ at a vertex 
$v_0$ of a graph $B$ (which is either finite or an asymptotic Cayley tree), and assume $w_k, k=1\ldots m$, and $z_j, j=1\ldots n$, satisfying 
$w_m<w_{m-1}<\dots < w_1< 0< z_1<z_2<\dots<z_n$ are the poles of $Q^B_{v_0}$ in $(-\frac qa,\frac qa)$. 
Then $\alpha_p$, given by \eqref{eqn:alphap}, has exactly one pole $z^*_i\in (z_i,z_{i+1})$ for each 
$i=1,\dots, n-1,$ and $w_j^*\in (w_{j+1},w_j)$ for each $j=1,\dots,m-1$. Moreover, $\alpha_p$ has no pole in $(w_1,z_1)$ 
and at most one pole in each of the intervals $(-\frac qa,w_m)$ and $(z_n,\frac qa)$. 
\end{lem}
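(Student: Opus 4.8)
The plan is to identify, inside the interval $(-\tfrac qa,\tfrac qa)$, the poles of $\alpha_p$ with the zeros of
\[
D_p(z):=\sigma_{v_0}Q(z)+p\,Q^B_{v_0}(z),
\]
and then to locate those zeros by an interlacing argument carried out in the resolvent variable $\lambda=z^{-1}$. First I would note that $Q(z)=\sum_{n=0}^{\infty}p_n(u,u)z^n$ has non-negative coefficients with $p_0=1$ and is even, so by the explicit formula \eqref{app3} (whose apparent poles at $z=\pm1$ cancel against the numerator) $Q$ is analytic and strictly positive on $(-\tfrac qa,\tfrac qa)$. Hence at a pole of $Q^B_{v_0}$ in this interval $D_p$ has a pole of the same order, so $\alpha_p=1-qQ^B_{v_0}/D_p$ stays finite there; thus the poles of $\alpha_p$ in $(-\tfrac qa,\tfrac qa)$ are exactly the zeros of $D_p$ (a common zero of $D_p$ and of the numerator $qQ^B_{v_0}$ would force $Q(z)=0$, which is impossible), and $z=0$ is not one of them since $D_p(0)=\sigma_{v_0}+p$.

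Next I would pass to $\lambda=z^{-1}$ and to the diagonal resolvent functions $\lambda^{-1}Q(\lambda^{-1})$ and $\lambda^{-1}Q^B_{v_0}(\lambda^{-1})$, which are matrix elements of $(\lambda-K^{T_q})^{-1}$ and $(\lambda-K^B)^{-1}$. By \eqref{eqn:spectral}, $\lambda^{-1}Q(\lambda^{-1})=\int_{-a/q}^{a/q}\rho_u(\mu)(\lambda-\mu)^{-1}d\mu$ with $\rho_u\ge0$, and by the spectral theorem (elementary when $B$ is finite, and given by \eqref{eqn:specrepQ} when $B$ is an asymptotic Cayley tree via Theorem \ref{thm:continuous}) $\lambda^{-1}Q^B_{v_0}(\lambda^{-1})=\int(\lambda-\mu)^{-1}d\nu(\mu)$ for a finite positive measure $\nu$ supported on the spectrum of $K^B$; the hypothesis on the poles of $Q^B_{v_0}$ says precisely that the atoms of $\nu$ lying outside $\cut$ are the points $\mu_i:=z_i^{-1}$ and $\mu_j':=w_j^{-1}$ and that $\nu$ has no band outside $\cut$, whence $\tfrac aq<\mu_n<\dots<\mu_1$ and $\mu_1'<\dots<\mu_m'<-\tfrac aq$. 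For $z\in(-\tfrac qa,\tfrac qa)\setminus\{0\}$, $D_p(z)=0$ is equivalent to $\lambda=z^{-1}\in(-\infty,-\tfrac aq)\cup(\tfrac aq,\infty)$ being a zero of
\[
\Phi(\lambda):=\lambda^{-1}D_p(\lambda^{-1})=\sigma_{v_0}\,\lambda^{-1}Q(\lambda^{-1})+p\,\lambda^{-1}Q^B_{v_0}(\lambda^{-1})=\int\frac{d\kappa(\mu)}{\lambda-\mu},
\]
where $\kappa:=\sigma_{v_0}\rho_u\,d\mu+p\,\nu$ is a positive measure with $\kappa(\mathbb R)=\sigma_{v_0}+p$. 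From this Cauchy-transform representation, $\Phi'(\lambda)=-\int(\lambda-\mu)^{-2}d\kappa(\mu)<0$ on every interval disjoint from $\mathrm{supp}\,\kappa$, so $\Phi$ is strictly decreasing and has at most one zero there; moreover $\mathrm{supp}\,\kappa$ meets $(-\infty,-\tfrac aq)\cup(\tfrac aq,\infty)$ only in the atoms $\mu_i,\mu_j'$, since a further atom or band there would produce an extra pole of $Q^B_{v_0}$ in $(-\tfrac qa,\tfrac qa)$.

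The statement then follows by inspecting the complementary intervals and translating back via $z=\lambda^{-1}$. On $(\mu_{i+1},\mu_i)$, $1\le i\le n-1$, the measure $\kappa$ has a positive atom at each endpoint, so $\Phi\to+\infty$ at $\mu_{i+1}^{+}$ and $\Phi\to-\infty$ at $\mu_i^{-}$; strict monotonicity gives exactly one zero, hence exactly one zero of $D_p$ and one pole $z_i^{*}$ of $\alpha_p$ in $(z_i,z_{i+1})$; the same on $(\mu_j',\mu_{j+1}')$ yields the unique $w_j^{*}\in(w_{j+1},w_j)$. On $(\mu_1,\infty)$, $\Phi$ decreases from $+\infty$ at $\mu_1^{+}$ to $0^{+}$ as $\lambda\to\infty$ (since $\Phi(\lambda)=\kappa(\mathbb R)\lambda^{-1}+o(\lambda^{-1})$ with $\kappa(\mathbb R)>0$), so $\Phi>0$ and $D_p$ has no zero in $(0,z_1)$; symmetrically $\Phi<0$ on $(-\infty,\mu_1')$ and $D_p$ has no zero in $(w_1,0)$; combined with $D_p(0)\ne0$ this gives that $\alpha_p$ has no pole in $(w_1,z_1)$. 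Finally $(\tfrac aq,\mu_n)$ and $(\mu_m',-\tfrac aq)$ are each disjoint from $\mathrm{supp}\,\kappa$, so $\Phi$ is strictly monotone on them and has at most one zero, i.e.\ $\alpha_p$ has at most one pole in $(z_n,\tfrac qa)$ and at most one in $(-\tfrac qa,w_m)$; the degenerate cases $m=0$ or $n=0$ are covered verbatim by the same partition into complementary intervals. The one genuinely delicate step is the reduction in the second paragraph — verifying that $Q^B_{v_0}$, and hence $\Phi$, is meromorphic on $(-\infty,-\tfrac aq)\cup(\tfrac aq,\infty)$ with no spectrum of $K^B$ hiding there beyond the listed poles — and this is precisely what Theorem \ref{thm:continuous} supplies when $B$ is an asymptotic Cayley tree (and is immediate when $B$ is finite); once that is granted, strict monotonicity of the Cauchy transform $\Phi$ does all the work, and in particular one never has to evaluate $\Phi$ at the band edges $\pm\tfrac aq$.
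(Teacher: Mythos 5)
Your proof is correct and is essentially the paper's argument in a change of variables: where you form the single Cauchy transform $\Phi(\lambda)=\lambda^{-1}\bigl(\sigma_{v_0}Q(\lambda^{-1})+pQ^B_{v_0}(\lambda^{-1})\bigr)$ of a positive measure and use its strict monotonicity off the support, the paper equivalently writes the Herglotz representation of $zQ^B_{v_0}(z)$ with positive weights and intersects the resulting increasing branches with the decreasing function $-\sigma_{v_0}p^{-1}zQ(z)$, interval by interval. The only substantive difference is that for the outermost intervals the paper additionally evaluates at the band edge $z=q/a$ to obtain a sharp existence criterion (its condition \eqref{137}), whereas you settle for the ``at most one'' bound that the lemma actually asserts.
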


\begin{proof}
By \eqref{eqn:specrepQ} 
the poles of $Q^B_{v_0}(z)$ are of the form $\lambda^{-1}$, where $\lambda\in S^B_{pp}\setminus\{0\}$ has an 
eigenfunction $\phi_\lambda$ that overlaps $|v_0\kt$, i.e., $\phi_\lambda(v_0)\neq 0$. By Theorem 
\ref{thm:decoupled}$(i)$ this implies that $\lambda\notin [-\frac aq,\frac aq]$ and so the eigenvalues in 
question are $z_i^{-1},\, i=1,\dots, n,$ and $w_j^{-1},\,j=1,\dots, m$, and \eqref{eqn:specrepQ} yields 
\begin{eqnarray}
zQ_{v_0}^B (z) 
 & = & \int_{-a/q}^{a/q}  \frac{z\rho^B_{v_0} (\lambda )}{1-z\lambda}\,d\lambda\nonumber \\ 
 & + &\sum_{i=1}^n \frac{za_i}{ 1-zz_i^{-1}} + \sum_{j=1}^m \frac{zb_j}{ 1-zw_j^{-1}}\,, 
  \label{135}
 \end{eqnarray}
where the coefficients $a_i$ and $b_j$ are positive numbers given by
\beq
a_i=\br v_0 |e_i| v_0\kt \quad\mbox{and}\quad b_j=\br v_0|f_j|v_0\kt \,.
\eeq
Here $e_i$ and $f_j$ denote the spectral projections of $K^B$ in $\ell^2(B)$  corresponding to 
$z_i^{-1}$ 
and $w_j^{-1}$, respectively. It follows by differentiation that $zQ^B_{v_0}(z)$ is a 
strictly increasing function 
on each of the open subintervals of $(-\frac qa,\frac qa)$ arising from the subdivision $-\frac qa < w_m <w _{m-1}<\dots 
< w_1 < 0 < z_1 < z_2 <\dots < z_n < \frac qa$. 

 Since $Q(z)\neq 0$ for $z\in (-\frac qa,\frac qa)$ we see from \rf{eqn:alphap} that $\alpha_p$ has poles at values 
 of $z$ for which 
 \beq\label{136}
 p\,Q^B_{v_0}(z)= - \sigma_{v_0}Q(z)\,,
 \eeq
and is otherwise analytic on $(-\frac qa,\frac qa)$. The function $-zQ(z)$ \eqref{app3} takes the value $0$ at $z=0$ 
and is decreasing for $0<z<q/a$, and the function $zQ^B_{v_0}(z)$ vanishes at $z=0$ and is increasing on the 
interval $(0,z_1)$, so in this interval there is no
 solution to \rf{136}.  In the intervals $(z_i,z_{i+1})$ the function
 $zQ^B_{v_0}(z)$ increases monotonically from $-\infty$ at $z=z_i$ to $+\infty$ at $z_{i+1}$, so
 in each of these intervals there is a unique solution, $z_i^*$,   to \rf{136}, see 
 Fig.\ \ref{fig:Qbranches}. 
 \begin{figure}[ht]
\centering
    \includegraphics[scale=0.25,trim={10cm  1.5cm  2cm 0cm},clip]{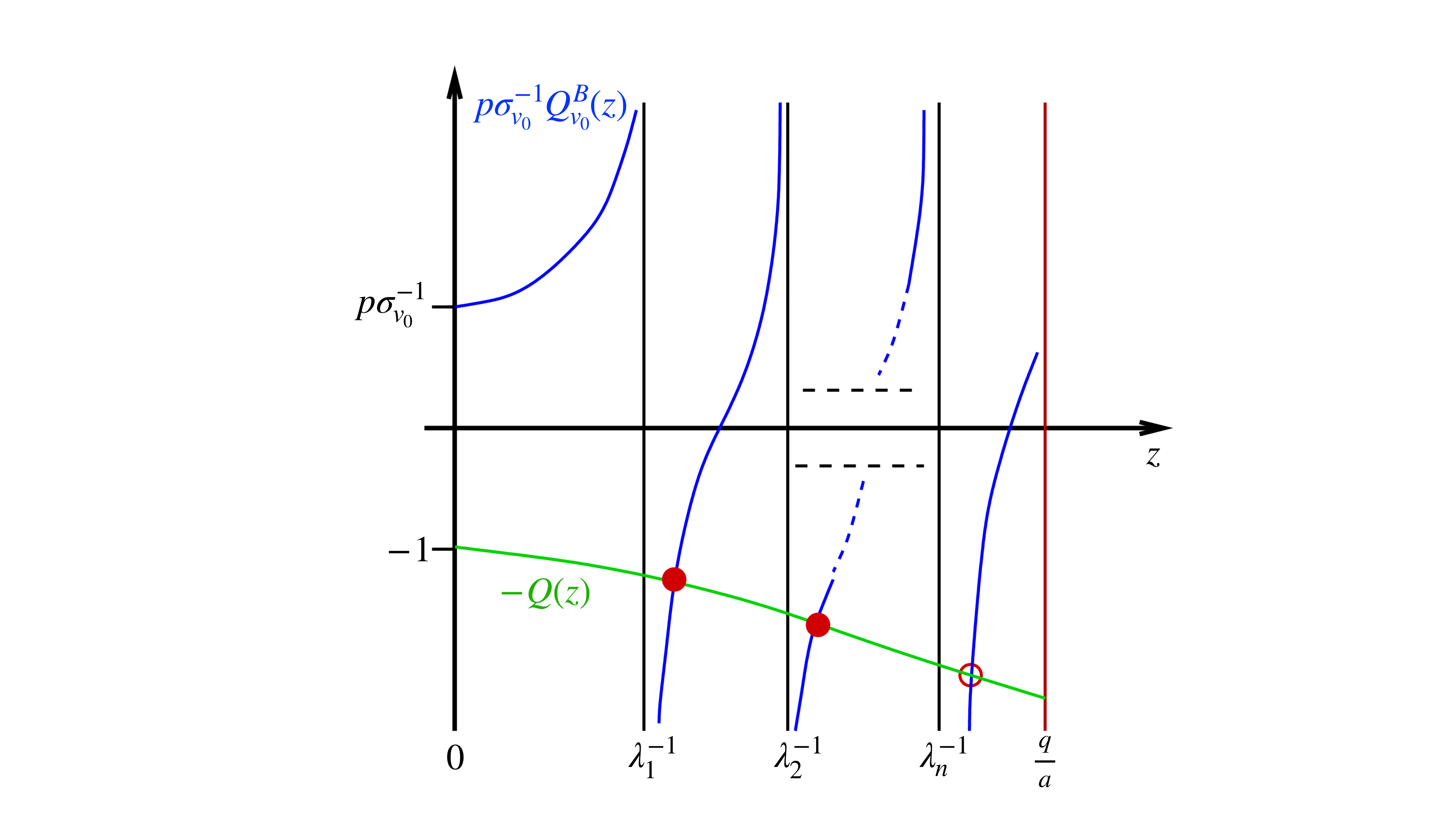}
    \caption{The solutions to the pole condition \eqref{136}. Blue (dark grey) lines represent the branches of  $ p\sigma_{v_0}^{-1}Q^{B}_{v_0}(z)$,
   and the green (light grey) line is $-Q(z)$. Solid points indicate solutions $z^*$ that must exist; the solution indicated by the open  point exists only if the condition 
   \eqref{137} is satisfied. }
   \label{fig:Qbranches} 
\end{figure} 
\noindent In the interval
 $(z_n, q/a)$ there is a solution $z_n^*$ if and only if
 \beq\label{137}
 \sigma^{-1}_{v_0}p\,Q^B_{v_0}\left(\frac qa\right)> - Q\left(\frac qa\right) = -\frac{2q-2}{q-2}\,.
 \eeq
 This proves the statement concerning poles in $(0,\frac aq)$. Similar arguments apply for the poles in $(-\frac aq,0)$, and clearly there is no pole at $z=0$. 
 Hence, the proof is complete.
 \end{proof}

\begin{proof}[Proof of Theorem \ref{thm:point}] Let $\cal G$ be a $G$-sequence as in \eqref{eqn:Gsequence} 
with $n=\sharp V^G_0$, i.e., such that all trees $T'_q$ to be grafted at any vertex in $V^G_0$ are grafted in one step. 
Using the notation of Lemma \ref{lemma:Qpolesinz} with $B=G^{(i)}$ and with $G$ denoting $G^{(i+1)}$, it then 
suffices to prove that 
\beq
\sum_{\mu\in S^B_{pp}\setminus  [-\tfrac{a}{q},\tfrac{a}{q}]} d^B(\mu) - 2 \; \leq \;\sum_{\lambda\in S^G_{pp}\setminus  [-\tfrac{a}{q},\tfrac{a}{q}]} d^G(\lambda)\;  \leq 
\;\sum_{\mu\in S^B_{pp}\setminus  [-\tfrac{a}{q},\tfrac{a}{q}]} d^B(\mu)\,.
\eeq
Consider first $Q^G_u(z)$, where $u\notin B$. Recalling again that $Q(z)$ and $z^{-1}P(z)$ are analytic in 
$(-\frac qa,\frac qa)$, it follows from \eqref{eqn:QG1} that the poles of $Q^G_u(z)$ are precisely those of 
$\alpha_p(z)$ as given by Lemma \ref{lemma:Qpolesinz} and hence their inverse values, $(z^*_i)^{-1}$ and $(w^*_i)^{-1}$, 
belong to $S^G_{pp}$. 

In order to compare eigenvalue multiplicities for $B$ and $G$ we consider three possible cases for an 
eigenvalue $\lambda$ of $K^G$ outside $[-\tfrac{a}{q}, \tfrac{a}{q}]$:

\smallskip

1)\; $\lambda^{-1}$ is different from all $z_i, z^*_i, w_j, w^*_j$. 
{By Lemma \ref{lemma:resolventmap}(\emph{i}), 
$\lambda^{-1}$ is not a pole of $Q^G_{v_0}$.} Hence all corresponding eigenfunctions of $K^G$ vanish at $v_0$ and are extensions of eigenfunctions of $K^B$ with eigenvalue $\lambda$. Since also $Q^B_{v_0}$ does not have a pole at 
$\lambda^{-1}$ it follows that 
\beq
d^G(\lambda) = d^B(\lambda)\,.
\eeq

2)\; $\lambda= z_i^{-1}$. In this case, $K^B$ has an eigenfunction $\phi_\lambda$ with eigenvalue $\lambda$ that overlaps $|v_0\kt$. Hence, the eigenspace of $K^B$ has a basis consisting of $\phi_\lambda$ and $d^B(\lambda)-1$ eigenfunctions that vanish at $v_0$. The latter can be extended trivially to eigenfunctions of $K^G$ with eigenvalue $\lambda$. On the other hand, $K^G$ can have no such eigenfunction overlapping $|v_0\kt$, since otherwise $Q^G_{v_0}$ would have a pole at $z_i$, which is not the case by 
{Lemma \ref{lemma:resolventmap}(\emph{i})}. 
We conclude that 
\beq
d^G(\lambda) = d^B(\lambda) - 1\,.
\eeq
Clearly, the same statement holds if $\lambda = w_j^{-1}$.  

3)\;  $\lambda = (z^*_i)^{-1}$. Since $Q^G_{v_0}$ has a pole at $z^*_i$ by 
{Lemma \ref{lemma:resolventmap}(\emph{iv})}  
we see that 
$K^G$ has an eigenfunction $\psi_\lambda$ with eigenvalue $\lambda$ that overlaps $|v_0\kt$ and hence the 
corresponding eigenspace of $K^G$ has a basis consisting of $\psi_\lambda$ and $d^G(\lambda)-1$ 
eigenfunctions that vanish at $v_0$ (and  on the trees grafted at $v_0$ by Lemma \ref{lemma:sphericalLe}). 
Restricting the latter to $B$ we obtain $d^G(\lambda) -1$ linearly independent eigenfunctions of $K^B$ with 
eigenvalue $\lambda$. On the other hand, $K^B$ does not have any such eigenfunctions that overlap 
$|v_0\kt$ since $Q^B$ is regular at $z_i^*$. Hence,
\beq
d^G(\lambda) = d^B(\lambda) +1
\eeq
in this case. The same conclusion holds for $\lambda= (w_j^*)^{-1}$. 

\smallskip

Note that if $\lambda\notin [-\tfrac{a}{q},\tfrac{a}{q}]$ 
{is an eigenvalue of $K^B$ that is not among the $z_i$ and $w_j$ 
then $Q_u(z)$ has a pole at $\lambda^{-1}$ for some $u\in V(B)$ and} 
{Lemma \ref{lemma:resolventmap}(\emph{ii})} 
then implies that $\lambda$ is an eigenvalue of $K^G$. According to 1) above the 
contribution of such eigenvalues to the sum of all eigenvalue multiplicities is the same for $K^B$ and $K^G$. 
For the remaining eigenvalues we get, using 2) and 3) above, 
\begin{align}
 \sum_{i=1}^n d^G(z_i^{-1}) + \sum_{j=1}^m d^G(w_i^{-1})
= & \sum_{i=1}^n (d^B(z_i^{-1}) -1) + \sum_{j=1}^m (d^B(w_j^{-1})-1)\nonumber\\
= & \sum_{i=1}^n d^B(z_i^{-1}) + \sum_{j=1}^m d^B(w_j^{-1}) -(n+m)
\end{align}
and 
\begin{align}
\sum_{i} d^G((z^*_i)^{-1}) + \sum_{j} d^G((w^*_i)^{-1})
= & \sum_{i} (d^B((z^*_i)^{-1}) +1) + \sum_{j} (d^B((w^*_j)^{-1})+1)\nonumber\\
= & \sum_{i} d^B((z^*_i)^{-1}) + \sum_{j} d^B((w^*_j)^{-1}) +(k+l)\,,
\end{align}
where $k=n-1$ or $k=n$ and $l=m-1$ or $l=m$ by Lemma \ref{lemma:Qpolesinz}. Finally, collecting the contributions to the sums of all eigenvalue multiplicities for $K^G$ and $K^B$ so displayed, corresponding to eigenvalues outside  $[-\tfrac{a}{q},\tfrac{a}{q}]$, the claimed inequalities follow.
\end{proof}
\begin{rem} Our results have a number of implications:
\begin{enumerate}
    \item Combining Theorem \ref{thm:decoupled}(\emph{ii}) and the upper bound of Theorem \ref{thm:point} we also have 
    \begin{eqnarray}
        \sum_{\mu\in S^G_{pp}} d^G(\mu)\;\leq \;\sum_{\mu\in S^{G^{(0)}}_{pp}} d^{G^{(0)}}(\mu)=\sharp V(G^{(0)})\,,
    \end{eqnarray}
    which is the analogue of the bound given in \cite{verdiere},\ Theorem 4.2, for the adjacency matrix.
    \item The lower bound of Theorem \ref{thm:point} 
    shows that the presence of $\ell^2$ states is ubiquitous. Specifically, if the number of eigenfunctions of $K^{G^{(0)}}$ with eigenvalues lying outside the continuous spectrum  exceeds twice the number of graft vertices, then $G$ necessarily has a non-empty pure point spectrum.
    \item From the proof of Lemma \ref{lemma:Qpolesinz} we see that, as trees are added to the core graph to generate the sequence $\mathcal{G} $, discrete eigenvalues either remain unchanged or move towards the continuum.  
        \end{enumerate}
\end{rem}

\section{Examples}\label{sec:examples}
In this section we consider simple examples of asymptotic Cayley trees where the pure point spectrum 
can be calculated rather explicitly.   First we look at the case when the core is a complete graph
and we graft the same number of planted Cayley trees at each vertex of the core.  Then we study the 
case when the core is a regular graph (all vertices of the same degree) and we graft 
planted Cayley trees on the core vertices so that the resulting asymptotic Cayley tree is also regular.

\subsection{The core is a complete graph}

Let $C$ be the complete graph on $n$ vertices which has spectrum $\{ 1 , -1/(n-1) \}$.
The positive eigenvalue is simple, and the eigenfunction is constant on $C$. The negative 
eigenvalue has multiplicity $n-1$.
The  eigenfunctions can be chosen to take the value $-(n-1)$
on one vertex, which can be any vertex on $C$, and $1$ on all other vertices. Of the $n$ 
such functions, any $n-1$ are linearly independent.
Now graft
$p$ copies of  $T'_q$  to each vertex of $C$.  The resulting graph $G$ is characterized 
by the 3 integers $p,q,n$.  
The permutation symmetry of the vertices of $C$ ensures that the eigenfunctions of $K^G$ take the same form on $C$ as those of $K^C$.  

In the planted Cayley trees the eigenfunction $\psi$ of  $K^G$  with eigenvalue $\lambda$ is given by \eqref{le4} and \eqref{le5}.
For the eigenfunction that is constant on $C$, the eigenvalue equation for a vertex on $C$ is 
\beq \label{cg2}
p\xi^{-1}+n-1=(p+n-1)\lambda\,.
\eeq
Combining \rf{le5} and \rf{cg2} shows that $\xi=1$ or
\beq\label{cg3}
\xi^{-1}=\frac{p+n-1}{ q(n-1) -(p+n-1)}\, .
\eeq
We see that $\psi$ is in $\ell^2(G)$ if 
\beq
1+\sqrt{q-1} <\frac{q(n-1)}{ p+n-1}\,, \label{cg4}
\eeq
which is always the case for large enough $q$ if $n$ and $p$ are fixed. From \eqref{le5} the corresponding value of 
$\lambda$ then lies  outside the 
continuous spectrum.

In the case of the degenerate eigenfunctions,  \eqref{cg2} is replaced by
\beq p\xi^{-1}-1=(p+n-1)\lambda\,,
\eeq
and we obtain
\beq
\xi^{-1}=\xi_\pm^{-1}=\frac{-q\pm \sqrt{q^2-4q(n-1)(p+n-1)+4(p+n-1)^2}}{ 2(q(n-1)-(p+n-1))}\,.\label{eqn:xicomplete}
\eeq
If $n$ and $p$ are fixed, and $q$ is large enough, then  $(q-1)\xi_+^{-2}<1$ and $\psi$ is in $\ell^2(G)$.
In particular,   if
\begin{align}
    q> 4(n-1+\frac{p^2-1}{n-1})(n+p-1)
\end{align}
then, uniformly in $n\ge 2$ and $p\ge 1$, the non-constant eigenfunctions of $K^C$ correspond to $\ell^2$-eigenfunctions of $K^G$.
This is in contrast to the eigenstate that is constant on the core which persists for all $n\ge 4$ and all $q\ge 3$.

\subsection{$q$-Regular asymptotic trees}
Theorem \ref{thm:continuous} applies uniformly to all asymptotic Cayley trees but Theorems 
\ref{thm:decoupled} and \ref{thm:point} are strongest for those whose number of graft vertices is 
small relative to the total number of vertices in $G^{(0)}$. By contrast we can make more 
detailed statements about $S^G_{pp}$ for asymptotic Cayley trees consisting of equal numbers 
of planted Cayley trees grafted to every vertex of a core that is a regular graph.
\begin{thm}\label{thm:qregular}
Let $G^{(0)}$ be a $(q-p)$-regular graph, where $q>p+2$, and let 
$G$ be the $q$-regular asymptotic tree 
obtained by grafting $p\ge 1$ copies of $T'_q$ at each vertex $v\in G^{(0)}$. Then the following hold:
\begin{itemize}
    \item[(i)] if $p>q-1-\sqrt{q-1}$ then 
    $S^{G}_{pp}$ is empty;
    \item[(ii)] if $p\le q-1-\sqrt{q-1}$ there is at least one member of $S^{G}_{pp}$.
\end{itemize}
\end{thm}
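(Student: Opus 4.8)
The plan is to reduce the determination of $S^G_{pp}$ entirely to the spectrum of the core transition matrix $K^{G^{(0)}}$, exploiting the extra rigidity available because $G$ is exactly $q$-regular. I would first dispose of eigenvalues inside the band: since planted trees are grafted at \emph{every} vertex of $G^{(0)}$ we have $V^G_0=V(G^{(0)})$, so the only eigenfunction of $K^{G^{(0)}}$ vanishing on $V^G_0$ is the zero function; hence $\hat S^G_0=\emptyset$, and Theorem~\ref{thm:decoupled}(i) forces $S^G_{pp}\subset\mathbb R\setminus\cut$. Next, for an eigenfunction $\phi\in\ell^2(G)$ with eigenvalue $\lambda\notin\cut$, Lemma~\ref{lemma:sphericalLe}(ii) says that on each of the $p$ copies of $T'_q$ grafted at a vertex $v\in G^{(0)}$ the function $\phi$ equals $\phi(v)\,\xi^{-d(v,\cdot)}$, where $\xi^{-1}=\lambda P(\lambda^{-1})$ is the root of $\xi^2-q\lambda\xi+q-1=0$ with $|\xi|>\sqrt{q-1}$. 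Writing the eigenvalue equation $A^G\phi=q\lambda\phi$ at $v$ and using that $v$ has $q-p$ neighbours in $G^{(0)}$ and $p$ tree-neighbours, each carrying the value $\phi(v)\xi^{-1}$, I get $(A^{(0)}\phi^{(0)})(v)=(q\lambda-p\xi^{-1})\phi(v)$ for all $v$, so the nonzero restriction $\phi^{(0)}$ is an eigenfunction of $A^{(0)}=(q-p)K^{G^{(0)}}$, i.e.\ $q\lambda-p\xi^{-1}=(q-p)\nu$ for some $\nu\in\mathrm{spec}(K^{G^{(0)}})\subseteq[-1,1]$. The converse is immediate: spherically extending a $\nu$-eigenfunction of $K^{G^{(0)}}$ with decay parameter $\xi$ gives an $\ell^2$-eigenfunction of $K^G$ with eigenvalue $\lambda=q^{-1}(\xi+(q-1)\xi^{-1})$, which lies outside $\cut$ whenever $|\xi|>\sqrt{q-1}$. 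Substituting $q\lambda=\xi+(q-1)\xi^{-1}$, the whole problem becomes: $S^G_{pp}\neq\emptyset$ if and only if
\[\xi+\frac{q-1-p}{\xi}=(q-p)\nu,\qquad |\xi|>\sqrt{q-1},\]
has a real solution for some $\nu\in\mathrm{spec}(K^{G^{(0)}})$.

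The second step is then an elementary discussion of this condition. Since $q>p+2$ gives $q-1-p\ge 1>0$, the function $g(\xi)=\xi+(q-1-p)\xi^{-1}$ is odd and strictly increasing on $|\xi|>\sqrt{q-1-p}$, hence on $|\xi|>\sqrt{q-1}$, with $g(\sqrt{q-1})=(2q-2-p)/\sqrt{q-1}$; so an admissible solution exists for a given $\nu$ exactly when $|\nu|>c:=(2q-2-p)/((q-p)\sqrt{q-1})$. Because $G^{(0)}$ is a connected regular graph, $\nu=1$ is always in $\mathrm{spec}(K^{G^{(0)}})$ and no eigenvalue exceeds $1$ in modulus, so $S^G_{pp}\neq\emptyset$ iff $c<1$. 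Finally, with $s=\sqrt{q-1}>1$, a one-line manipulation gives $c<1\iff 2q-2-p<(q-p)s\iff (s-1)(s^2-s-p)>0\iff p<s^2-s=q-1-\sqrt{q-1}$, which yields part~(ii) for $p<q-1-\sqrt{q-1}$ and part~(i) for $p\ge q-1-\sqrt{q-1}$, in particular for $p>q-1-\sqrt{q-1}$. The only value not covered this way is $p=q-1-\sqrt{q-1}$, which is possible only when $q-1$ is a perfect square; there $c=1$ and the constant-eigenfunction branch degenerates to $\xi=\sqrt{q-1}$, the band edge, so this borderline case should be examined separately.

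I expect the first step to be the real content, and the second to be routine. The crucial point — and what makes the $\ell^2$ requirement turn into a lower bound on a core eigenvalue — is that the decay rate on the grafted trees is \emph{not} free: Lemma~\ref{lemma:sphericalLe} pins it to $\xi^{-1}=\lambda P(\lambda^{-1})$, the branch with $|\xi|>\sqrt{q-1}$, so normalizability is equivalent to the strict inequality $|\xi|>\sqrt{q-1}$, hence to $|\nu|>c$. One must also be careful to exclude eigenvalues inside $\cut$, which is exactly where the hypothesis that every core vertex is a graft vertex (so $\hat S^G_0=\emptyset$) enters. After that, everything follows from the monotonicity of $g$ together with the fact that $\mathrm{spec}(K^{G^{(0)}})$ always contains $1$.
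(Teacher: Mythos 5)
Your argument is correct and arrives at the same algebraic threshold as the paper, but by a genuinely different route. The paper works entirely with generating functions: decomposing walks on the $q$-regular graph $G$ into excursions into the grafted trees yields the substitution identity $Q^G_u(z)=Q^{G^{(0)}}_u(f(z))$ with $f(z)=z\big(1+\tfrac{p(1-P(z))}{q-p}\big)^{-1}$, and $S^G_{pp}$ is then read off from the poles; inverting $f$ gives exactly your relation between $\lambda'$ and the core eigenvalue, and the existence condition for a real solution with $|\lambda'|>\tfrac aq$ is precisely your $|\nu|>c=\tfrac{2q-2-p}{(q-p)\sqrt{q-1}}$. You instead argue directly on eigenfunctions, using Theorem \ref{thm:decoupled}(i) (with $V^G_0=V(G^{(0)})$) to exclude the band and Lemma \ref{lemma:sphericalLe}(ii) to pin the decay rate $\xi^{-1}=\lambda P(\lambda^{-1})$, so that normalizability becomes the strict inequality $|\xi|>\sqrt{q-1}$ and hence $|\nu|>c$. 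Your route is more self-contained, reusing only the structural results already proved, and it yields a complete two-sided description of $S^G_{pp}$ in terms of the core spectrum; the paper's route is shorter and produces the explicit formula for $\lambda'$ (equation \eqref{lambdaprime}) directly.

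One genuine payoff of your care with the $\ell^2$ condition is at the borderline $p=q-1-\sqrt{q-1}$ (realizable when $q-1$ is a perfect square, e.g.\ $q=5$, $p=2$): there $c=1$, the only candidate is $\nu=1$, which forces $\xi=\sqrt{q-1}$ and $\lambda'=\pm\tfrac aq$, and the spherical extension is not square-summable. The paper's proof degenerates in the same way (its parameter range collapses to $s=0$, i.e.\ the band edge, where the singularity of $Q^G_u$ is a square root rather than a pole), so statement (ii) with ``$\le$'' is not actually established in that case; your analysis indicates $S^G_{pp}=\emptyset$ there, and you are right to flag it for separate treatment. Away from that boundary your proof is complete and correct.
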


\begin{proof}
For a walk in $G$ the step from a vertex $u\in G^{(0)}$ to a neighbouring vertex $v\in G^{(0)}$ 
can be decomposed into excursions from $u$ into the attached trees followed by a final step along 
the edge $(u,v)$. As all vertices in $G$ have the same degree, it follows that, for  $u\in G^{(0)}$,
\begin{align}
    Q^G_u(z)=Q^{G^{(0)}}_u(f(z))\,, \label{eqn:polereln}
\end{align}
where
\begin{align}
    f(z)=\frac{z}{1+\frac{p\, (1-P(z))}{q-p}}\,.\label{eqn:freln}
\end{align}
By the spectral theorem, the poles of $Q^{G^{(0)}}_v(z)$ are at $\lambda^{-1}$, 
where $\lambda\in S^{G^{(0)}}\setminus \{0\}$ and those of $ Q^G_v(z)$ are at $\lambda'^{-1}$, 
where $\lambda'\in S^{G}_{pp}\setminus \{0\}$. It follows from \eqref{eqn:polereln} that if 
$\lambda\in S^{G^{(0)}}\setminus \{0\}$, and 
a real solution $\lambda'^{-1}\in[-1,1]$ exists to $\lambda^{-1}=f(\lambda'^{-1})$, then 
$\lambda'\in S^{G}_{pp}$. From \eqref{eqn:freln} we obtain
\begin{align}
    \lambda=\frac{\lambda'q(2q-2-p)+\mathrm{sgn}(\lambda')\,p\sqrt{q^2\lambda'^2-a^2}}{2(q-1)(q-p)}\,,
    \label{eqn:lambdareln}
\end{align}
and see immediately that there are real solution pairs $\lambda,\lambda'$  only if $\abs{\lambda'}>
\frac{a}{q}$ and $\abs{\lambda}>\frac{2q-2-p}{(q-p)\sqrt{q-1}}$. If
$p>q-1-\sqrt{q-1}$ the second condition becomes $\abs{\lambda}>1$ which is not possible so statement 
$(i)$ is proved. On the other hand, if $p\le q-1-\sqrt{q-1}$, straightforward algebra shows that 
there are solution pairs to \eqref{eqn:lambdareln} in which $s\in [0,\log\frac{q-p-1}{\sqrt{q-1}} ]$ 
is computed from
\begin{align}\abs\lambda &= 2\,\frac{\sqrt{q-p-1}}{q-p}\cosh\left(s+\tanh^{-1}\frac{p}{2q-p-2}\right)\,,
\label{eqn:lambdaG0}
\end{align}
and then $\lambda'$ is given by
\begin{align}
    \lambda'&=\mathrm{sgn}(\lambda)\, \frac{a}{q}\cosh (s)\,. \label{eqn:lambdaG}
\end{align}
In particular  $S^{G^{(0)}}$ always contains $\lambda=1$,  and we obtain in that case  
\begin{align}
    \lambda'=1-\frac{p(q-p-2)}{q(q-p-1)}\,,\label{lambdaprime}
\end{align}
which proves statement $(ii)$. Note further that for every eigenvalue $\lambda\in S^{G^{(0)}} $ 
satisfying these conditions there is at least one  eigenstate of $K^G$  with corresponding 
eigenvalue $\lambda'$ given by using \eqref{eqn:lambdaG0} and \eqref{eqn:lambdaG}. 
\end{proof}

We note that if $G^{(0)}$ is a complete graph on $n$ vertices then $G$ is $q$-regular if
$n+p-1=q$.  In that case Theorem \ref{thm:qregular}($ii$) gives the inequality \rf{cg4}.

Theorem \ref{thm:qregular} is easily generalised to the graphs for which the degree of vertices in  
$G^{0)}$ differs from $q-p$ but we will not give details here. On a regular graph $K^G$ is simply 
proportional to the adjacency matrix. 

Our result is therefore related to those of \cite{McKay} which 
considers the class $\mathcal C_q$ of $q$-regular graphs 
such that 
the density of closed loops inside a ball of radius $R$ goes to zero as $R$ goes to infinity.
Clearly, $q$-regular asymptotic Cayley trees are a special case of this 
construction. In \cite{McKay}  it is shown that the continuous spectrum of a graph $G\in \mathcal C_q$ 
is the same as that of $T_q$ and the limiting spectral density is \eqref{eqn:spectral}; this result can 
be viewed as a coarse-grained version of Theorem \ref{thm:continuous}.

\section{Quantum walks}\label{sec:quantumwalks}

\subsection{Definitions}

As an example of the application of our results we consider a class of continuous time  quantum walks. 
Let $H$ be a self-adjoint operator 
on $\ell^2(G)$ with respect to the inner product $(\cdot, \cdot)$.
A quantum walk on $G$ is then defined by the unitary time development operator
\beq\label{2}
U(t)=e^{-itH}\,.
\eeq
If the state of the walk at time $t=0$ is given by the normalized vector $|\psi_0\kt$, then its state at
time $t$ is $|\psi (t)\kt =U(t)|\psi_0\kt$.  In particular, if the walk is at the vertex $u$ at time $0$, then the 
probability amplitude that it is at a vertex $v$ at time $t$ is given by
\beq\label{3}
A_t(u,v) = \sqrt{\frac{\sigma_u}{\sigma_v}}\br v|U(t) |u\kt\,,
\eeq
and the probability of finding the walk at $v$ at time $t$, given that it was at $u$ at time $t=0$ is
\beq\label{4}
P_t(u,v)=|A_t(u,v)|^2\,.
\eeq

For classical random walks we define the spectral dimension of a graph $G$
to be $d_s$ if, and only if, $p_n(u,u)\sim n^{-d_s/2}$ as $n\to\infty$. It can be shown that $d_s$, if it exists, does not depend on $u$.   If $p_n(u,u)$ falls off faster than any power of $n$ then one says that the spectral dimension is infinite.  
From \eqref{rw2} it follows that
\begin{itemize}
\item[(i)] If $Q_u(z)$ is analytic at $z=1$ then $p_n(u,u)$ decays at least exponentially with $n$ so $d_s=\infty$. Using \eqref{app3} we see that this is the case for a regular Cayley tree.
\item[(ii)] If $Q_u(z)$ has a branch point of the form $(1-z)^{\beta}$, then, by a Tauberian theorem \cite{flajolet:2009}, $d_s = 2\beta+2$. This is the case on, for example, the graph $\bbZ$ (equivalently the regular Cayley tree with $q=2$) which has $d_s= 1$, as can be seen by setting $q=2$ in 
\eqref{app3}. In \cite{DJW1} the
 generating function and its corresponding spectral dimension are calculated for a number of simple graphs.
\end{itemize}
Analogously 
the quantum 
spectral dimension $d_{qs}$ of a graph $G$ is defined by
 \beq\label{rw5}
 P_t(u,u)\sim t^{-d_{qs}/2},~~t\to\infty \,.
 \eeq
 For example, in the case of $\bbZ$, and with $H=-K^G$,
  it is easy to see that $d_{qs}=2$. It can be shown quite generally that  $d_{qs}\le 2d_s$ \cite{davidjonsson}. We will see below that this upper bound on $d_{qs}$ is not necessarily saturated; indeed on asymptotic Cayley trees
 $d_{qs}=6$.

\subsection{Quantum walk on an asymptotic Cayley tree} 

Our results on the spectrum of asymptotic Cayley trees can be used to understand the properties of the quantum walks described above on such graphs.

\begin{thm}\label{thm:asympt}
    Let $G$ be an asymptotic Cayley tree, and $H=-K^G$. 
    Quantum walk on $G$ has the following properties for large elapsed time $t$.
    \begin{itemize}
\item[(i)] If $S_{pp}^G=\emptyset$, then the probability of staying inside a fixed ball around the starting vertex $u$  decays at a rate $t^{-3}$ for large $t$  and hence $d_{qs}=6$. 
(Note that this includes the case of the   pure Cayley tree $T_q$.)
\item[(ii)] If $S_{pp}^G\subset \cut$, then 
a walker starting at   $u\notin G^{(0)}\setminus V^G_0 $ stays within a fixed ball around $u$ with a probability decaying asymptotically as $t^{-3}$ for $t$ large. On the other hand, a walker starting inside $G^{(0)}\setminus V_0^G$ is trapped in this set with a probability $P_t(u)$ at time $t$ of the form 
\beq
P_t(u) = \sum_{\lambda\in S^G_{pp}} \br u|e_\lambda|u\kt + \tilde C^G_t(u)\,t^{-\frac 32} + O(t^{-\frac 52})\,,
\eeq
where $\tilde C^G_t$ is a bounded oscillating function of $t$ (and we note that the $t$-independent term on the right-hand side is positive unless all eigenfunctions of $K^G$ vanish at $u$).
\item[(iii)] If $S_{pp}^G$ contains at least one eigenvalue $\lambda\notin\cut\cup S^{G^{(0)}}$, then 
there is at least one tree $T_q'$ grafted to $G^{(0)}$ at some vertex $v_0$ such that 
a walker starting at $u\in T_q'$ will be trapped in a ball of fixed radius $R>0$ around $u$ 
with a probability that declines exponentially with $d^G(v_0,u)$ in the limit $t\to\infty$.
\end{itemize}
\end{thm}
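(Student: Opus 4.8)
The plan is to read off every amplitude from the spectral representation \eqref{eqn:specrepgen} and then reduce the $t\to\infty$ analysis to one oscillatory‑integral estimate. Since $H=-K^G$ we have $U(t)=e^{itK^G}$, so \eqref{3} and \eqref{eqn:specrepgen} with $f(\mu)=e^{it\mu}$ give
\[
A_t(u,v)=\sqrt{\tfrac{\sigma_u}{\sigma_v}}\left(\sum_{\mu\in S^G_{pp}}e^{it\mu}\,\br v|e_\mu|u\kt-\frac1\pi\int_{-a/q}^{a/q}e^{it\mu}\,\tilde B^G_{v,u}(\mu)\sqrt{a^2-q^2\mu^2}\,d\mu\right),
\]
a bounded almost periodic term plus a ``band'' integral. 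By Theorem \ref{thm:continuous}, $\tilde B^G_{v,u}$ is rational with no poles on $\cut$; the phase $\mu\mapsto e^{it\mu}$ has no stationary point, and the amplitude vanishes exactly to order $\tfrac12$ at the band edges $\pm a/q$ (from $\sqrt{a^2-q^2\mu^2}$, while $\tilde B^G_{v,u}$ stays regular there). A classical endpoint expansion (van der Corput / Erd\'elyi) then yields
\[
\int_{-a/q}^{a/q}e^{it\mu}\,\tilde B^G_{v,u}(\mu)\sqrt{a^2-q^2\mu^2}\,d\mu=t^{-3/2}\left(c^{+}_{u,v}e^{iat/q}+c^{-}_{u,v}e^{-iat/q}\right)+O(t^{-5/2}),
\]
with $c^{\pm}_{u,v}$ a fixed nonzero constant times $\tilde B^G_{v,u}(\pm a/q)$; so the continuous part of $A_t(u,v)$ is $O(t^{-3/2})$, of exact order $t^{-3/2}$ whenever $\tilde B^G_{v,u}(a/q)$ or $\tilde B^G_{v,u}(-a/q)$ is nonzero.

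\emph{Parts (i) and (ii).} If $S^G_{pp}=\emptyset$ only the band integral remains, so $|A_t(u,v)|=O(t^{-3/2})$ uniformly over the finitely many $v\in B_R(u)$, whence $\sum_{v\in B_R(u)}P_t(u,v)=O(t^{-3})$; the matching lower bound, and $d_{qs}=6$, follow once $\tilde B^G_u(a/q)\neq 0$ or $\tilde B^G_u(-a/q)\neq 0$ (see below). For (ii), Theorem \ref{thm:decoupled}$(i)$ together with Lemma \ref{lemma:sphericalLe}$(i)$ shows that, since now $S^G_{pp}\subset\cut$, every eigenfunction of $K^G$ is the zero‑extension of a function in $E^G_0(\lambda)$ and hence vanishes on all grafted trees and on $V^G_0$; so $\br v|e_\mu|u\kt=0$ for all $v$ whenever $u\notin G^{(0)}\setminus V^G_0$, and for such $u$ the amplitude has no pure‑point component and the argument of (i) applies verbatim. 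For $u\in G^{(0)}\setminus V^G_0$ put $P_t(u)=\sum_{v\in G^{(0)}\setminus V^G_0}P_t(u,v)$ and expand $|A_t(u,v)|^2$ as pure‑point‑squared $+$ cross $+$ continuous‑squared; summing over $v$, the continuous‑squared part is $O(t^{-3})$, the cross term is $t^{-3/2}$ times a bounded trigonometric factor (this is $\tilde C^G_t(u)$), and the pure‑point‑squared part, because all eigenfunctions are orthonormal for $(\cdot,\cdot)$ and supported on $G^{(0)}\setminus V^G_0$ (so that $\sum_v\sigma_v^{-1}\br v|e_\mu|u\kt\,\br v|e_{\mu'}|u\kt=\delta_{\mu\mu'}\sigma_u^{-1}\br u|e_\mu|u\kt$), collapses to the $t$‑independent value $\sum_{\mu\in S^G_{pp}}\br u|e_\mu|u\kt$, which is positive unless all eigenfunctions of $K^G$ vanish at $u$. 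This gives the stated formula.

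\emph{Part (iii).} Let $\lambda\in S^G_{pp}\setminus(\cut\cup S^{G^{(0)}})$ with eigenfunction $\phi$. If $\phi$ vanished on all of $V^G_0$, then by Lemma \ref{lemma:sphericalLe}$(ii)$ (with $A=\phi(v_0)=0$) it would vanish on every grafted tree, and a short computation with the $G$‑eigenvalue equation at each graft vertex then makes $\phi|_{G^{(0)}}$ a nonzero eigenfunction of $K^{G^{(0)}}$ with eigenvalue $\lambda$, contradicting $\lambda\notin S^{G^{(0)}}$. Hence $\phi(v_0)\neq 0$ for some graft vertex $v_0$; on the tree $T'_q$ grafted there Lemma \ref{lemma:sphericalLe}$(ii)$ gives $\phi(v)=\phi(v_0)\,(\lambda P(\lambda^{-1}))^{d_G(v_0,v)}$ with $|\lambda P(\lambda^{-1})|=|\xi_+|^{-1}<(q-1)^{-1/2}<1$, so $\phi$ is nonzero and decays exponentially in the height on $T'_q$. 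For a walker started at $u\in T'_q$, averaging over time kills the band contribution and leaves
\[
\lim_{T\to\infty}\frac1T\int_0^T\sum_{v\in B_R(u)}P_t(u,v)\,dt=\sum_{v\in B_R(u)}\frac{\sigma_u}{\sigma_v}\sum_{\mu\in S^G_{pp}}\big(\br v|e_\mu|u\kt\big)^2.
\]
Eigenvalues $\mu\in\cut$ contribute $0$ (their eigenfunctions vanish on $T'_q$), while each $\mu\notin\cut$ contributes a term which, by sphericity of its eigenfunctions on $T'_q$, lies between positive multiples of $|\xi_+(\mu)|^{-2d_G(v_0,u)}$. Retaining $v=u,\ \mu=\lambda$ bounds the left side below by $(\br u|e_\lambda|u\kt)^2\geq\mathrm{const}\cdot|\phi(u)|^4$, and the finite sum bounds it above by a similar quantity; both are exponentially small in $d_G(v_0,u)=d^G(v_0,u)$, which is the assertion.

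\emph{Where the work is.} Everything above except the band‑integral asymptotics is bookkeeping with the spectral projections; the one genuinely delicate point is that the return amplitude decays \emph{exactly} like $t^{-3/2}$, i.e.\ that $\tilde B^G_u(\pm a/q)$ does not vanish for both signs. I would establish this by propagating the $\sqrt{q^2-a^2z^2}$ branch point along a $G$‑sequence \eqref{eqn:Gsequence}: it is present and nonzero for the pure tree, where $\tilde B_u(\pm a/q)=-q^2/\big(2\pi^2(q-2)^2\big)\neq0$ by \eqref{eqn:spectral}, and each grafting step (Lemma \ref{lemma:resolventmap}) multiplies by $Q(z)$ and by rational factors whose denominator $pQ^{G^{(i)}}_{v_0}(z)+\sigma_{v_0}Q(z)$ is, by Theorem \ref{thm:decoupled} and the monotonicity of $zQ^{G^{(i)}}_{v_0}(z)$ established in Lemma \ref{lemma:Qpolesinz}, nonzero at $z=\pm q/a$; hence the branch point — and with it the leading band‑edge coefficient — survives, so the $t^{-3/2}$ (respectively $t^{-3}$) rate is sharp. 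This branch‑point tracking, rather than the oscillatory‑integral estimate or the projection algebra, is the step I expect to require the most care.
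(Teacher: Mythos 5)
Your proof follows essentially the same route as the paper: the spectral representation \eqref{eqn:specrepgen} splits the amplitude into a finite pure-point sum plus a band integral whose band-edge asymptotics produce the $t^{-3/2}$ term (this is the paper's Lemma \ref{lemma:Atdep}, obtained there via $\lambda=\tfrac aq\cos\theta$, integration by parts and stationary phase rather than an Erd\'elyi endpoint expansion), after which parts (i)--(iii) follow from the same projection bookkeeping, with Theorem \ref{thm:decoupled} and Lemma \ref{lemma:sphericalLe} controlling where eigenfunctions are supported and how they decay on the grafted trees. The only real deviations are your Ces\`aro time-average in (iii), where the paper instead exhibits $P_t(R,u)$ directly as a non-vanishing bounded oscillation plus $O(t^{-3/2})$, and your (welcome, but not fully closed) attempt to show that $\tilde B^G_{u,v}(\pm\tfrac aq)$ cannot both vanish so that the $t^{-3}$ rate is sharp --- note that the non-vanishing of the denominator $pQ^{B}_{v_0}(z)+\sigma_{v_0}Q(z)$ at $z=\pm\tfrac qa$ is exactly the borderline of condition \eqref{137} in Lemma \ref{lemma:Qpolesinz} and is not guaranteed by the results you cite, though the paper itself leaves this same point implicit.
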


\noindent To prove these results we need the following lemma.
\begin{lem}\label{lemma:Atdep}
 Let $G$ be an asymptotic Cayley tree, and let $H=-K^G$. Then the behaviour at large $t$ of the probability amplitude $A^G_t(u,v)$ for finite $d_G(u,v)$ is given by
    \begin{align}
    A^G_t(u,v)&= \sum_{\lambda\in S_{pp}^G }    
    \sqrt{\frac{\sigma_u}{\sigma_v}}\br v| e_\lambda| u\kt\, e^{i\lambda t}+C^G_t(u,v) \,t^{-{3/2}}+ O(t^{-5/2})\,, \label{eqn:AGt}
\end{align}
where $C^G_t$ is a bounded oscillating function of $t$ and $e_\lambda$ is the spectral projection of $K^G$ corresponding to $\lambda$.
\end{lem}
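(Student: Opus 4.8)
The plan is to extract the large-$t$ asymptotics of
$A^G_t(u,v)= \sqrt{\sigma_u/\sigma_v}\,\br v|e^{itK^G}|u\kt$ from the spectral representation \eqref{eqn:specrepgen}, applied with $f(\mu)=e^{it\mu}$. This immediately gives
\beq
A^G_t(u,v)= \sqrt{\frac{\sigma_u}{\sigma_v}}\sum_{\lambda\in S^G_{pp}} e^{it\lambda}\br u|e_\lambda|v\kt - \frac{1}{\pi}\sqrt{\frac{\sigma_u}{\sigma_v}}\int_{-a/q}^{a/q} e^{it\mu}\,\tilde B^G_{u,v}(\mu)\sqrt{a^2-q^2\mu^2}\,d\mu\,,
\eeq
so the whole problem reduces to the large-$t$ behaviour of the oscillatory integral over the continuous spectrum. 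The point-spectrum sum is already in the stated form (the first term of \eqref{eqn:AGt}), so everything hinges on showing the integral equals $C^G_t(u,v)\,t^{-3/2}+O(t^{-5/2})$ with $C^G_t$ bounded and oscillating.

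First I would record the structure of the integrand: by Theorem \ref{thm:continuous} and the discussion following it, $\tilde B^G_{u,v}$ is a rational function of $\mu$ with no poles in $\cut$, hence smooth on a neighbourhood of that interval; call the smooth, nonvanishing-at-endpoints density $g(\mu)=\tilde B^G_{u,v}(\mu)\sqrt{a^2-q^2\mu^2}$. The endpoints $\mu=\pm a/q$ are square-root branch points of $g$ (the factor $\sqrt{a^2-q^2\mu^2}$ vanishes like a square root), while $g$ is $C^\infty$ in the interior. This is exactly the situation handled by the standard asymptotic analysis of Fourier-type integrals with endpoint singularities: one localizes with a partition of unity, and the contribution of each endpoint where the amplitude behaves like $(\text{distance})^{1/2}$ times a smooth function produces, after the change of variables and appeal to the Erdélyi / stationary-phase lemma for integrals $\int_0^\delta s^{1/2}h(s)e^{\pm i t s}\,ds$, a leading term proportional to $t^{-3/2}$ times an oscillating factor $e^{\pm iat/q}$, with a full asymptotic expansion in half-integer powers $t^{-3/2},t^{-5/2},\dots$; the interior contribution, coming from a compactly supported smooth function, is $O(t^{-N})$ for all $N$ by non-stationary phase (the phase $\mu$ has no critical points). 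Collecting the two endpoint contributions gives $C^G_t(u,v)t^{-3/2}+O(t^{-5/2})$ with
\beq
C^G_t(u,v) \;=\; \text{(const)}\cdot\Big(\kappa_+\,e^{iat/q}\,e^{i\pi/4} + \kappa_-\,e^{-iat/q}\,e^{-i\pi/4}\Big)\,\sqrt{\frac{\sigma_u}{\sigma_v}}\,,
\eeq
where $\kappa_\pm$ are determined by $\tilde B^G_{u,v}$ and the derivative of $\sqrt{a^2-q^2\mu^2}$ at the endpoints; this is manifestly bounded and oscillating in $t$.

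The one genuine subtlety — the step I expect to be the main obstacle — is justifying that $\tilde B^G_{u,v}$ really is smooth (indeed rational with no poles) on a full neighbourhood of $\cut$, so that $g$ has no hidden singularity in the open interval that would contribute a $t^{-1/2}$ or $t^{-1}$ term and spoil the claimed order. This follows from the inductive structure established in the proof of Theorem \ref{thm:continuous}: each application of Lemma \ref{lemma:resolventmap} along a $G$-sequence produces denominators of the form $\sigma_{v_0}Q(z)+pQ^{G^{(i)}}_{v_0}(z)$, and one must check these do not vanish for $z^{-1}\in\cut$; but $Q(z)$ is analytic and nonzero there and, by the spectral representation, $Q^{G^{(i)}}_{v_0}$ has no poles with inverse in $\cut$ (by Theorem \ref{thm:decoupled}(i) its only poles there would correspond to eigenfunctions overlapping $|v_0\kt$, which are excluded), so the combined denominator has at most finitely many real zeros, all with inverse outside $\cut$. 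Hence $\tilde B^G_{u,v}$ is regular on a neighbourhood of $\cut$ in $\mathbb R$, the only singularities of $g$ are the two square-root endpoints, and the asymptotic analysis above applies verbatim. The remaining steps — the partition of unity, the non-stationary-phase estimate in the interior, and the endpoint expansion — are routine and I would cite a standard reference (e.g. the method of \cite{flajolet:2009} or the classical stationary-phase/endpoint lemma) rather than reproduce them.
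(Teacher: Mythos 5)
Your proposal is correct and takes essentially the same route as the paper: both start from the spectral representation \eqref{eqn:specrepgen} with $f(\mu)=e^{it\mu}$, note that $\tilde B^G_{u,v}$ is rational with no poles on $[-\tfrac aq,\tfrac aq]$, and obtain the $t^{-3/2}$ term from the square-root vanishing of the continuous spectral density at the band edges $\mu=\pm\tfrac aq$, with the interior contributing only lower order. The only difference is technical packaging — the paper substitutes $\mu=\tfrac aq\cos\theta$ and uses integration by parts plus stationary phase, whereas you apply an Erd\'elyi-type endpoint lemma directly in $\mu$ — and these are equivalent.
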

\begin{proof}
    By \eqref{eqn:specrepgen} the probability amplitude $A^G_t(u,v)$ is given by
\begin{equation}
A^G_t(u,v)=\sqrt{\frac{\sigma_u}{\sigma_v}}\sum_{\lambda\in S^G_{pp}}\br v|e_\lambda|u\kt\, e^{it\lambda}
- \sqrt{\frac{\sigma_u}{\sigma_v}}\frac{1}{\pi}\int_{-\frac aq}^{\frac aq} e^{it\lambda} {\tilde B}^G_{u,v}(\lambda)\sqrt{a^2 -q^2\lambda^2}\,d\lambda \,,\label{eqn:Ain ACT}
\end{equation}
where  $\tilde B^G_{u,v}(\lambda)$ is a rational function with no poles in $\cut$. 
Setting $\lambda=\frac aq \cos\theta$ the integral term in \eqref{eqn:Ain ACT} becomes 
\beq
A^G_{u,v}(t)_{cont} = -\sqrt{\frac{\sigma_u}{\sigma_v}}\frac{a^2}{q\pi} \int_{0}^{\pi} e^{i\tfrac aq t\cos\theta}\tilde B^G_{u,v}(\tfrac aq \cos\theta) \sin^2\theta\, d\theta\,.
\eeq
By partial integration this gives
\begin{align}
A^G_{u,v}(t)_{cont} =&\nonumber\\ - \sqrt{\frac{\sigma_u}{\sigma_v}}&\frac{a}{i\pi t} \int_{0}^{\pi} e^{i\tfrac aq t\cos\theta}\left(\tilde B^G_{u,v}(\tfrac aq \cos\theta) \cos\theta - \frac aq\frac{d{\tilde B}^G_{u,v}}{d\lambda}(\tfrac aq \cos\theta)\sin^2\theta\right)\, d\theta\,.
\end{align}
Performing a further integration by parts of the contribution from the second term in 
parentheses and applying a standard stationary phase approximation we get 
\begin{align}
    A^G_t(u,v)_{cont}=-\sqrt{\frac{aq\sigma_u}{2\pi t^3\sigma_v}} \left(e^{i(\frac{at}{q}+\frac \pi 4)} \tilde B^G_{u,v}({\textstyle\frac aq})- e^{-i(\frac{at}{q}+\frac \pi 4)} \tilde B^G_{u,v}(-{\textstyle \frac aq})\right)+ O(t^{-5/2})\,.\label{eqn:aCTcont}
\end{align}
Combining \eqref{eqn:Ain ACT} and \eqref{eqn:aCTcont} proves the lemma.
\end{proof}

\begin{proof}[Proof of Theorem \ref{thm:asympt}]
The probability that a walker is to be found within a distance $R$ of the starting vertex 
$u\in G$ after time $t$ has elapsed
is
\begin{align}
    P_t(R,u)=\sum_{v:\, d_G(v,u)\le R} \abs{A_t(u,v)}^2\,.
\end{align}
First, let  $S_{pp}^G=\emptyset$. Then, 
 by Lemma \ref{lemma:Atdep}, 
  \begin{align}
    P_t(R,u)=\frac{1}{t^3}\sum_{v:\, d_G(v,u)\le R} \abs{C^G_t(u,v)}^2 +O(t^{-4})\,,\label{eqn:Tfree}
\end{align}
but $C^G_t(u,v)$ is a bounded oscillating function of $t$ 
  so statement $(i)$ follows.

Now let $S_{pp}^G\subset\cut$. 
If $\lambda\in S_{pp}^G$ and  $u\notin V(G^{(0)})\setminus V^G_0$, then by 
Lemma \ref{lemma:sphericalLe}$(i)$ $\br v|e_\lambda|u\kt$ vanishes for all $v\in V(G)$ 
so \eqref{eqn:Tfree} holds by Lemma \ref{lemma:Atdep}. On the other hand, if $u\in V(G^{(0)})\setminus V^G_0$ then $\br v|e_\lambda|u\kt=0$ for $v\notin V(G^{(0)})\setminus V^G_0$ and the probability that the walker stays inside $G^{(0)}$ at 
time $t$ equals 
\begin{align}
    P_t(u)&=\sum_{v\in V(G)\setminus V^G_0} \left|\sum_{\lambda \in S_{pp}^G} \sqrt{\frac{\sigma_u}{\sigma_v}}\br v|e_\lambda| u\kt  e^{i\lambda t}\right|^2\nonumber \\
    &+2t^{-3/2}\mathrm{Re}\sum_{\mu \in S_{pp}^G}\sum_{v\in G^{(0)}\setminus V^G_0} \sqrt{\frac{\sigma_u}{\sigma_v}}\br v|e_\lambda| u\kt C^G_t(u,v)^*e^{i\lambda t} +O(t^{-5/2})\,.
\end{align}
The first term evaluates to the constant $\sum_{\lambda\in S^G_{pp}} \br u| e_\lambda| u\kt $, 
and statement $(ii)$ follows.

Finally, if $S_{pp}^G$ contains at least one eigenvalue $\lambda\notin\cut\cup S^{G^{(0)}}$ then it follows from (the proof of)  Lemma \ref{lemma:sphericalLe}$(ii)$ that there is at least one vertex $v_0\in V^G_0$ such that $\br v_0|e_\lambda|v_0\kt >0$ and if $u,v$ belong to one of the $p$ trees $T_q'$ grafted at $v_0$ we have 
\begin{equation}
    \br v|e_\lambda |u\kt = \frac{\sigma_{v_0}+p}{q}\br v_0|e_\lambda|v_0) \,(\lambda P(\lambda^{-1}))^{d^G(v_0,u)+d^G(v_0,v)}
\end{equation}
(where, as previously, $\sigma_{v_0}$ denotes the degree of $v_0$ in $G^{(0)}$).
Since this identity holds for all eigenvalues $\lambda\notin\cut$ and $\br v|e_\lambda| u\kt = 0$ if $\lambda\in\cut$, we obtain from \eqref{eqn:AGt}, for any vertex $u\in T'_q$ with $d^G(v_0,u)>R$, that
\begin{align}
 P_t(R,u)=\sum_{v:d^G(u,v)\leq R}\Big|\sum_{\lambda\in S_{pp}^G\setminus\cut} \tfrac{\sigma_{v_0}+p}{q}\br v_0|e_\lambda|v_0)\,e^{i\lambda t} &\Big(\lambda P(\lambda^{-1})\Big)^{d^G(v_0,u)+d^G(v_0,v)}\Big|^2 \nonumber\\ &+ O(t^{-3/2})\,.
\end{align}
Here, the first term on the right-hand side is a non-vanishing bounded oscillating function of $t$ and, recalling that $|\lambda P(\lambda^{-1})|< (q-1)^{-\frac 12}$, if $|\lambda|>\frac aq$, it decays for fixed $R$ exponentially with $d^G(v_0,u)$. This proves statement $(iii)$.
\end{proof}

In the case of $T_2$ (i.e. the infinite chain) $S^G_{pp}=\emptyset$;  using \eqref{app6} and setting  $q=2$, it is easily seen that \eqref{eqn:Ain ACT}
reduces to the well known formula \cite{fahri}
\beq\label{a6A}
  A^{T_2}_t(u,v)= \frac{1}{2\pi}\int_{-\pi}^\pi e^{it\cos\theta +i\ell\theta}\,d\theta =
i^{\ell}J_{\ell}(t),
\eeq
where $J_{\ell}$ is the $\ell$-th Bessel function.

A similar analysis can be used to show that a walker who escapes into a grafted tree always behaves ballistically at large times. Here we outline the calculation to demonstrate this.  Let $G$ be an asymptotic Cayley tree obtained by grafting $p\geq 1$ trees $T'_q$ at a vertex $v_0$ of a graph $B$ (which hence is either finite or an asymptotic Cayley tree), and let $u\in B$. 
Let $\nu>0$  be a real number and consider times $t$  chosen so that $\nu t$ is an integer.
The probability that the walk is at any  vertex $v_{\nu t}$ in the tree(s) a distance $ \nu t$ from $v_0$ is then given at large times by
\begin{align}
    P_t(v_{\nu t},u)=(q-1)^{ \nu t -1}\abs{A^G_t(u,v_{\nu t})}^2\,.
\end{align}

We have 
from \eqref{eqn:Ain ACT},\eqref{eqn:QG2}  and \eqref{eqn:expdecay} that
\begin{align}
A^G_t(u,v_{ \nu t})=
  \sqrt{\frac{\sigma_u}{q}} &\sum_{\lambda \in S^G_{pp}\setminus\cut}    \br v_0\vert e_\lambda|u\kt\,  ( \lambda P(\lambda^{-1}))^{ \nu t}\hfill\nonumber\\
& -\sqrt{\frac{\sigma_u}{q}}\frac{a^2}{q\pi} \int_{0}^{\pi} e^{i(\tfrac aq t\cos\theta+ \nu t\theta)} B_{u,v_0}(\tfrac aq \cos\theta) \sin^2\theta\, d\theta\,,
\end{align}
where
\begin{align}
    B_{u,v_0} (\lambda)=\lambda^{-1} \frac{q\,Q(\lambda^{-1})Q^B_{u,v_0}(\lambda^{-1})}{\sigma_{v_0}Q(\lambda^{-1}) + p\,Q^B_{v_0}(\lambda^{-1})} \,. 
\end{align}
It follows from Theorem \ref{thm:continuous} that, since $\lambda P(\lambda^{-1})$ has no poles and no zeroes,  $B^G_{u,v_0} (\lambda)$ is analytic for $\lambda \notin \cut\cup S^G_{pp}$. 
For large $t$, the integral  can be evaluated  by 
the method of steepest descent. If  $\nu=\frac{a}{q}\sin\theta_0$, $\theta_0\in (0,\frac{\pi}{2})$ we obtain
 \begin{equation}
         A^G_t(u,v_{ \nu t})_{cont}=i\nu\sqrt{\frac{\sigma_u}{2\pi qt\lambda_+}} \,\left(\frac{-1}{q-1}\right)^{\frac{1}{2}\nu t}\, 
   \left(B_{u,v_0}(\lambda_+)\, e^{i\phi_t} +B_{u,v_0}(-\lambda_+)\, e^{-i\phi_t}\right)\,,
         \end{equation}
         where $\lambda_+=\frac{a}{q}\cos\theta_0$ and
         \begin{equation}
              \phi_t=t\left(\lambda_++\nu(\theta-\frac{\pi}{2})\right)-\frac{\pi}{4}\,.
         \end{equation}
 The contribution from $S^G_{pp}$ decays exponentially with $t$, so  $P_t(v_{\nu t},u)$
 is an oscillating function whose magnitude 
 decays like $t^{-1}$. For $\nu=\frac{a}{q}\cosh\theta_0$, $\theta_0\in (0,\infty)$, we obtain  
    \begin{equation}
         A^G_t(u,v_{ \nu t})_{cont}= i\nu\sqrt{\frac{\sigma_u}{2\pi qt\abs{\lambda}}} \left(\frac{-1}{q-1}\right)^{\frac{1}{2}\nu t} 
          \, 
   B_{u,v_0}(\lambda)\, e^{-\phi_t} \,,
         \end{equation}
         where $\lambda=-i\frac{a}{q}\sinh\theta_0$ and 
         \begin{equation}
              \phi_t=t\left(\nu\theta_0 -\abs{\lambda}\right)\,,
         \end{equation}
so all contributions to  $P_t(v_{\nu t},u)$ decay exponentially with $t$.  Hence the walker's motion is  ballistic at large times, and there is an Airy type front moving with  velocity $\nu_c=\frac{a}{q}$.  This behaviour is qualitatively very similar to the $q=2$ case \cite{fahri}, and to that of continuous time quantum walk on the infinite toothed comb \cite{davidjonsson}.

In summary, the motion on a pure Cayley tree is ballistic and, as $t\to\infty$,  the walker moves towards infinity at a constant 
rate. The  probability for return to the 
starting point after time $t$ decays like $t^{-3}$. This is in contrast with the  
exponential decay of the classical case; on the highly branched structure  of a Cayley tree, quantum walkers remain 
in the vicinity of their starting point \emph{longer} than 
do classical walkers.
  In the case of an asymptotic Cayley tree
a new feature occurs if $S_{pp}^G\ne \emptyset$:  the square summable eigenfunctions prevent the 
walk from moving to infinity with probability one so the walker is at least partially trapped in the vicinity of the core.  Whether such eigenfunctions exist depends on the structure of the core graph but we have shown  that in general they do, and given some explicit examples.
The part of the wave function which is not trapped behaves
qualitatively as on the pure Cayley tree; so a walker who escapes from the vicinity of the core moves towards infinity at a constant rate.
The core graph therefore behaves like a trap and the resulting properties of quantum walk are qualitatively similar to those observed 
on graphs with a trap potential on a single site \cite{Luck}.

\bigskip
\noindent{\bf Acknowledgements} JFW's research was funded by Research England. BD acknowledges support from Villum Fonden 
via the QMATH Centre of Excellence (Grant no. 10059).  TJ would like to acknowledge hospitality at the Rudolf Peierls
Centre for Theoretical Physics in Oxford and at the Department of Mathematical Sciences in the University of Copenhagen.


\begin{thebibliography}{99}

\bibitem{graphsbook} A.\ E.\ Brouwer and W.\ H.\ Haemers, {\it Spectra of Graphs}, ISBN 978-1-4614-1938-9, Springer, New York (2011)

\bibitem{verdiere} Y.~C. de Verdiere and F.\ Truc, {\it Scattering theory for graphs isomorphic to a regular
tree at infinity}, J.~Math.\ Phys.\ {\bf 54}. 063502 (2015)

\bibitem{review} S.~E.~Venegas-Andraca, {\it Quantum walks: a comprehensive review}, Quant.\ Inf.\
Proc.\ {\bf 11} (2012) 1015-1106

\bibitem{kadian2021quantum} K.~Kadian, S.~Garhwal, A.~ Kumar,
{\it Quantum walk and its application domains: A systematic review},
  Computer Science Review, {\bf 41} (2021) 100419
  
\bibitem{Portugal} R.~Portugal, {\it Quantum Walk and Search Algorithms}, 2nd edn., Springer New York (2018)

\bibitem{childs}A.~M.~Childs, {\it On the relationship between continuous- and discrete-time
quantum walk}, Commun.\ Math.\ Phys.\ {\bf 294} (2010) 581-606


\bibitem{strauch}F.~W.~Strauch, {\it Connecting the discrete- and continuous-time quantum walks}
Phys.\ Rev.\ A {\bf 74}, 030301 (R) (2006)


\bibitem{ReedSimonIV}
M.\ Reed and B.\ Simon, {\it Analysis of Operators}, ISBN: 9780125850049, Academic Press (1978)

\bibitem{spectrum}H.\ Kesten, {\it Symmetric random walk on groups}, Trans.~Math. Am.\ Math.\ Soc. {\bf 92} 336-354 (1959)


\bibitem{McKay} B.~D.~McKay, {\it The expected eigenvalue distribution of a large regular graph}, Lin.\ Alg.\ and
its Applications, {\bf 40} 203-216 (1981)

\bibitem{flajolet:2009}
P.~Flajolet and R.~Sedgewick, {\em Analytic Combinatorics}, Cambridge University Press, Cambridge, 2009. Available at\\
http://algo.inria.fr/flajolet/Publications/books.html

\bibitem{DJW1}B.\ Durhuus, T.\ Jonsson and J.\ Wheater, {\it Random walks on combs}, J.~Phys.~A: Math.~Theor.\ {\bf 39}
1009-1038 (2006)


\bibitem{davidjonsson} F.\ David and T.\ Jonsson,  {\it Quantum walk on a comb with infinite teeth},
J.~Phys.~A: Math.~Theor.\ {\bf 55} 095304 (2022)


\bibitem{fahri} A.~ M.~Childs, R.~Cleve, E.~Deotto, E.~Farhi, S.~Gutmann, and D.~A.~Spielman, 
{\it Exponential algorithmic speedup by quantum walk}, Proc\ 35th Symposium on Theory of Computing (STOC 2003) 59-68


\bibitem{Luck}P.~L.~Krapivsky, J.~M.~Luck,  K.~Mallick,  {\it Survival of Classical and Quantum Particles in the Presence of Traps}, J. Stat. Phys. {\bf 154} 1430–1460 (2014). https://doi.org/10.1007/s10955-014-0936-8




\end{thebibliography}
\end{document}